\newtheorem{fact}{Fact}
\newtheorem{ex}{Example}
\newcommand{\dollar}[0]{\$}
\begin{document}
\frontmatter          
\pagestyle{headings}  
\addtocmark{Real-Time Vector Automata} 
\mainmatter              
\title{Real-Time Vector Automata}
\titlerunning{Real-Time Vector Automata}  
%
\author{ \"{O}zlem Salehi\inst{1} \and Abuzer Yakary{\i}lmaz\inst{2,}\thanks{Yakary{\i}lmaz was partially
supported by FP7
FET-Open project QCS.}
 \and A. C. Cem Say \inst{1}}
\authorrunning{\"Ozlem Salehi et al.} 
%
%
\institute{Bo\v{g}azi\c{c}i University, Department of Computer Engineering, Bebek 34342 Istanbul, Turkey\\
\email{ozlem.salehi@boun.edu.tr},
\email{say@boun.edu.tr},
\and
University of Latvia, Faculty of Computing, Raina bulv. 19, R\={\i}ga, LV-1586, Latvia \\
\email{abuzer@lu.lv}
}

\maketitle              

\begin{abstract}
We study the computational power of real-time finite automata that have been augmented with a vector of dimension $k$, and programmed to multiply this vector at each step by an appropriately selected $k\times k$ matrix. Only one entry of the vector can be tested for equality to 1 at any time. Classes of languages recognized by deterministic, nondeterministic, and ``blind" versions of these machines are studied and compared with each other, and the associated classes for multicounter automata, automata with multiplication, and generalized finite automata.

\keywords{vector automata, counter automata, automata with multiplication, generalized automata}
\end{abstract}

\section{Introduction}
There have been numerous generalizations of the standard deterministic finite automaton model. In this paper,
we introduce the vector automaton, which is linked to many such generalizations like counter automata,
automata with multiplication, and generalized stochastic automata \cite{FMR68,Gr78,ISK76,Tu69}. A vector
automaton is a finite automaton endowed with a $k$-dimensional vector, and the capability of multiplying this
vector with an appropriately selected matrix at every computational step. Only one of the entries of the
vector can be tested for equality to 1 at any step. Since equipping these machines with a ``one-way" input
head, which is allowed to pause on some symbols during its left-to-right traversal of the input, would easily
make them Turing-equivalent, we focus on the case of real-time input, looking at the deterministic and
nondeterministic versions of the model. We make a distinction between general vector automata and ``blind"
ones, where the equality test can be  performed only at the end of the
computation. We examine the effects of restricting $k$ to 1, and the input alphabet to be unary. The related language classes are compared with each other, and classes associated with other models in the literature. The deterministic blind version of the model turns out to be equivalent to Turakainen's generalized stochastic automata in one language recognition mode, whereas real-time nondeterministic blind vector automata are shown to recognize some $\mathsf{NP}$-complete languages.

\section{Background}

\subsection{Notation}

Throughout the paper, the following notation will be used: $Q$ is the set of
states, where $q_0 \in Q$ denotes the initial state, $Q_a \subset Q$ denotes the
set of accept states, and $\Sigma$ is the input alphabet. An input string $w$ is placed between two endmarker symbols
on an infinite tape in the form $\cent w\dollar$. The set $\{\downarrow, \rightarrow , \leftarrow \}$ represents the possible head
directions. The tape head can stay in the same position ($\downarrow$), move one square to the right
($\rightarrow$), or move one square to the left ($\leftarrow$) in one step.

For a machine model $A$, $\mathfrak{L}(A)$
denotes the class of languages recognized by automata of type $A$.

Let $E^{i}_k(c)$ 
denote the matrix obtained by setting the $i$'th entry of the first column of the $k\times k$ identity matrix to $c$. For a vector $v$, the product $vE^{i}_k(c)$ is the vector obtained by adding $c$ times the
$i$'th entry of $v$ to the first entry when $i>1$, and the vector obtained by multiplying the first entry of $v$ by  $c$ when 
$i=1$.

\subsection{Machine definitions}

\subsubsection{Multicounter Automata.}

A \textit{real-time deterministic multicounter automaton} (rtD\textit{k}CA) \cite{FMR68} is a 5-tuple
\[ \mathcal{M}=(Q, \Sigma, \delta, q_0, Q_a). \]

The transition function $\delta$ of $\mathcal{M}$ is specified so that
$\delta(q,\sigma,\theta)=(q',c)$
means that $\mathcal{M}$
moves the head to the next symbol, switches to state $q'$, and updates its counters according to the list of increments represented by $c \in \{-1,0,1\}^k$,
if it reads symbol $\sigma \in \Sigma$, when  in state $q \in Q$, and
with the counter values having signs as described by $\theta \in \{0,\pm\}^k$. At the
beginning of the computation, the tape head is placed on the symbol $\cent$,
and the counters are set to 0. At the end of the computation, that is, after the right endmarker $\dollar$
has been scanned, the input is
accepted if $\mathcal{M}$ is in an accept state.

A \textit{real-time deterministic blind multicounter automaton} (rtD\textit{k}BCA) \cite{Gr78}
$\mathcal{M}$ is a rtD\textit{k}CA which can
check the value of its counters only at the end of the computation. Formally,
the transition function is now replaced by $\delta(q,\sigma)=(q',c).$ The input
is accepted at the end of the computation if $\mathcal{M}$ enters an accept
state, and all counter values are equal to 0.

\subsubsection{Finite Automata With Multiplication.}

A \textit{one-way deterministic finite automaton with multiplication} (1DFAM) \cite{ISK76} is a 6-tuple
\[\mathcal{M}=(Q,\Sigma,\delta,q_0,Q_a,\Gamma), \] where $\Gamma$ is a finite set of rational numbers
(multipliers). The transition function $\delta$ is defined as
$ \delta: Q \times \Sigma \times \Omega
\rightarrow Q\times \{\downarrow, \rightarrow\} \times \Gamma, $ where $\Omega=\{=,\neq\}. $
$\mathcal{M}$ has a register which can store any rational number, and is initially set to 1. Reading input
symbol
$\sigma$ in state $q$, $\mathcal{M}$ compares the current value of the register with 1, thereby calculating
the corresponding value $\omega \in \Omega$, and switches its state to $q'$, ``moves" its head in ``direction"
$d$, and multiplies the register by $\gamma$, in accordance with the transition function value
$\delta(q,\sigma,\omega)=(q',d,\gamma)$. The input string is accepted
if $\mathcal{M}$ enters an accept state with the register value equaling 1 after it scans the right endmarker
symbol.

A 1DFAM \textit{without equality} (1DFAMW) is a 1DFAM which can not check whether or not the register has
value 1 during computation. The transition function $\delta$ is replaced by $\delta(q,\sigma)=(q',d,\gamma)$.
The accept condition of the 1DFAMW is the same with the 1DFAM.

\subsubsection{Generalized Finite Automata.}

A \textit{generalized finite automaton} (GFA) \cite{Tu69} is  a
5-tuple
\[ \mathcal{G}=(Q,\Sigma,\{A_{\sigma \in \Sigma}\}, v_0, f), \]
where the $A_{\sigma \in \Sigma}$'s are $|Q|\times |Q|$ are real valued
transition matrices, and $v_0$ and $f$ are the real valued initial row vector and
final column vector, respectively. The acceptance value for an input string $w
\in \Sigma^*$ is defined as $f_{\mathcal{G}}(w)=v_oA_{w_1}\dots A_{w_{|w|}}f$.

A GFA whose components are restricted to be rational numbers is called a \textit{Turakainen finite automaton}
(TuFA) in \cite{Yak12}.

Let $\mathcal{G'}$ be a Turakainen finite automaton. Languages of the form
\[L=(\mathcal{G'},\mbox{=}\lambda) \equiv \{w\in \Sigma^* \mid f_{\mathcal{G'}}(w)=\lambda\}\]
for any $\lambda \in \mathbb{Q}$ constitute the
class $\mathsf{S}^=_{\mathbb{Q}}$.

\section{Vector Automata}

 A \textit{real-time deterministic vector automaton of dimension $k$} (rtDVA(\textit{k})) is a 6-tuple
\[\mathcal{V} =(Q,\Sigma,\delta,q_0,Q_a,v),\]
 where
$v$ is a $k$-dimensional initial row vector, and the
transition function $\delta$ is defined as
\[\delta: Q \times \Sigma \times \Omega \rightarrow Q\times S,\]
 where $S$ is the set of $k \times k$ rational-valued matrices, and $\Omega=\{=,\neq\}$, as in the definition
of 1DFAM's.

Specifically, $\delta(q,\sigma,\omega)=(q',M)$ means that when $\mathcal{V}$ is
in state $q$ reading symbol $\sigma \in \Sigma$, and the first entry of its
vector corresponds to $\omega \in \Omega$ (with $\omega$ having the value = if and only if this entry is equal
to 1),
$\mathcal{V}$ moves to state $q'$, multiplying its vector with the matrix $M
\in S$. As in the definition of 1DFAM's, $\omega$ is taken to be = if the first entry of the vector equals 1,
and $\neq$ otherwise. The string is accepted if
$\mathcal{V}$ enters
an accept state, and the first entry of the vector is 1, after processing the right end-marker symbol
$\dollar$.


\begin{remark}
The designer of the automaton is free to choose the initial setting $v$ of the vector.
\end{remark}

In the definition, it is stated that the machine can only check the first entry of the
vector for equality to 1. Sometimes we find it convenient to design programs that check for equality to some
number other than 1. One may also wish that it were possible to check  not the first, but some other entry
of the vector. In the following theorem, we show that we can assume our rtDVA(\textit{k})'s have that
flexibility. For
the purposes of that theorem, let a rtDVA(\textit{k})$^i_{c}$ be a machine similar to a rtDVA(\textit{k}), but
with a generalized
definition that enables it to check the $i$'th entry, for equality to the number $c$.

\begin{theorem}
 \textbf{i.} Given a \textup{rtDVA(\textit{k})}$^i_{1}$ recognizing a language $L$, one can construct a
rtDVA(\textit{k}) that recognizes
$L$.
\textbf{ii.} For any $c \in \mathbb{Q}$, given a \textup{rtDVA(\textit{k})}$^1_{c}$ recognizing a language
$L$, one can
construct a \textup{rtDVA($k$+1)} that recognizes $L$.
\end{theorem}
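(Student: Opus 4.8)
The plan is to establish both parts by explicitly constructing the claimed machines and using matrix manipulations that exploit the structure of the $E^i_k(c)$ matrices introduced in the notation. Throughout, the key observation is that the only thing the automaton can ``see'' about its vector is whether a designated entry equals a designated value; everything else is internal bookkeeping that the designer controls.

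For part \textbf{i}, given a rtDVA(\textit{k})$^i_1$ that checks the $i$'th entry for equality to $1$, I would build an ordinary rtDVA(\textit{k}) that maintains a permuted version of the same vector, so that what the original machine keeps in coordinate $i$, the new machine keeps in coordinate $1$. Concretely, let $P$ be the permutation matrix swapping coordinates $1$ and $i$. The new machine uses initial vector $v_0 P$, and for every transition with matrix $M$ in the original machine, the new machine uses the conjugated matrix $P M P$ (note $P^{-1}=P$). Then an easy induction shows that whenever the original machine's vector is $u$, the new machine's vector is $uP$; in particular the new machine's first entry equals the original machine's $i$'th entry at every step, so the equality tests (both during the computation and in the acceptance condition) agree. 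The state set, input alphabet, and accept states are copied verbatim. The only mild care needed is checking that the transition function of the new machine is well defined — i.e., that it reacts to $\omega$ in the same way the original reacts to its test — which is immediate from the correspondence of first-and-$i$'th entries.

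For part \textbf{ii}, given a rtDVA(\textit{k})$^1_c$ that checks the first entry for equality to a constant $c\in\mathbb{Q}$, I would construct a rtDVA($k+1$) whose vector has an extra ``scratch'' coordinate used to form the quantity (first entry) minus $c$ times (constant $1$). The idea: append a $(k+1)$-th coordinate that is permanently held equal to $1$ (this is achieved by having every transition matrix act as the identity on that coordinate and never write to it), and maintain in the first coordinate the value $x_1 - c$, where $x_1$ is what the original machine stores in its first coordinate. More carefully, embed each original $k\times k$ matrix $M$ into a $(k+1)\times(k+1)$ matrix that (a) simulates $M$ on the first $k$ coordinates, (b) fixes the $(k+1)$-th coordinate, and then post-multiply by $E^{k+1}_{k+1}(-c)$ followed by a correction that re-adds $c$ — the cleanest route is to track the shifted value directly: let the new first coordinate always hold $x_1-c$, so that ``$x_1=c$'' becomes ``new first coordinate $=0$''. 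Testing for equality to $0$ rather than $1$ is then handled by a further affine adjustment of the same kind, or more simply by choosing the scratch-coordinate constant and the embedding so that the tested quantity lands on $1$ exactly when $x_1=c$; for instance keep the new first coordinate equal to $x_1 - c + 1$, which equals $1$ iff $x_1=c$, and which can be updated using the additive action of the $E^{i}_{k+1}(\cdot)$ matrices on the auxiliary coordinate. Initial vector: take the original $v_0$, set coordinate $k+1$ to $1$, and set coordinate $1$ to $(v_0)_1 - c + 1$. Again an induction shows the invariant is preserved by every transition, and the acceptance condition transfers.

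The main obstacle — really the only delicate point — is verifying that the additive corrections can be folded into the transition matrices \emph{after} the simulated step of $M$, since $M$ itself overwrites the first coordinate in a way the designer does not control. The trick is that the auxiliary coordinate $k+1$ is never touched by $M$ (we embed $M$ as acting trivially there), so at every step we have an untouched copy of the value $1$ available; right-multiplying by $E^{1}_{k+1}(1)$-type matrices lets us add any fixed rational multiple of that $1$ into the first coordinate, and one checks that the net affine map $x_1 \mapsto (\text{original update of }x_1) - c + 1$ is realizable as a single $(k+1)\times(k+1)$ rational matrix because the needed additive constant is a fixed rational. I would write out this one matrix identity explicitly and note that composing it with the embedding of $M$ gives the required transition matrix of the rtDVA($k+1$); the rest is routine induction and copying of the finite-state control.
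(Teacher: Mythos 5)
Your proposal is correct and matches the paper's proof essentially step for step: part (i) is handled by conjugating every transition matrix with the permutation matrix swapping coordinates $1$ and $i$ (and permuting the initial vector accordingly), and part (ii) by appending an always-$1$ coordinate and maintaining the first coordinate shifted so that the test for $c$ becomes a test for $1$. The one detail to make explicit is that the shift must be undone \emph{before} the embedded matrix $N$ acts as well as reapplied after it --- the paper's transition matrix is $E^{k+1}_{k+1}(c-1)\,N\,E^{k+1}_{k+1}(-c+1)$ --- since applying $N$ directly to the shifted vector would contaminate every coordinate through the first row of $M$; this is precisely the single matrix identity you say you would write out.
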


\begin{proof}

\textbf{i.} Suppose that we are given a rtDVA(\textit{k})$^i_{1}$
$\mathcal{V}  =(Q,\Sigma,\delta,q_0,Q_a,v)$. We will construct an equivalent
rtDVA(\textit{k}) $\mathcal{V'} =(Q,\Sigma,\delta',q_0,Q_a,v')$.
Let $J$ denote the matrix obtained from the $k\times k$ identity matrix by interchanging the first and $i$'th rows. 
We will use multiplications with $J$ repeatedly to swap the first and
$i$'th entries of the vector when it is time for that value to be checked, and then to restore the vector back
to its original order, so that the rest of the computation is not affected.
The initial vector of $\mathcal{V'}$ has to be a reordered version of $v$ to let the machine check the
correct entry at the first step, so $v'=vJ$. We update the individual transitions so that if
$\mathcal{V}$ has the move $\delta(q,\sigma,\omega)=(q',M)$, then $\mathcal{V'}$ has the move
$\delta'(q,\sigma,\omega)=(q',JMJ)$ for every $q \in Q$, $\sigma \in \Sigma$, and $\omega \in
\Omega$.

\textbf{ii.} Suppose that we are given a rtDVA(\textit{k})$^1_{c}$
$\mathcal{V}  =(Q,\Sigma,\delta,q_0,Q_a,v)$. We construct an equivalent rtDVA($k$+1) $\mathcal{V'}
=(Q,\Sigma,\delta',q_0,Q_a,v')$. The idea is to repeatedly subtract $(c-1)$ from the first entry of the
vector when it is time for that value to be checked, and then add $(c-1)$ to restore  the original
vector. We will use the additional entry (which will always equal 1 throughout the computation) in the vector
of $\mathcal{V'}$ to perform these additions and subtractions, as will be explained soon. Let $v''$ be a
$(k+1)$-dimensional
vector equaling $[v_1,v_2,...,v_k,1]$, where $v=[v_1,v_2,...,v_k]$. The
initial vector of $\mathcal{V'}$ has to be a modified version of $v''$ to accommodate the check for equality
to $1$ in the first step, so $v'=v''E^{k+1}_{k+1}(-c+1)$. For every individual transition
$\delta(q,\sigma,\omega)=(q',M)$ of $\mathcal{V}$, $\mathcal{V'}$ has the move
$\delta'(q,\sigma,\omega)=(q',E^{k+1}_{k+1}(c-1)NE^{k+1}_{k+1}(-c+1)) $, where the $(k+1) \times (k+1)$ matrix $N$ has
been obtained by adding a new row-column pair to $M$, i.e.
$N_{i,j}=M_{i,j} $ for $ i,j=1,...,k$, $N_{(k+1)j}=0
$ for $ j=1,...,k$, $N_{i(k+1)}=0$ for $ i=1,...,k$ and $N_{(k+1)(k+1)}=1$.

Note that when $c \neq 0$, there is an alternative method for constructing an equivalent rtDVA(\textit{k})
which does not require an extra entry in the vector, where the first entry is modified simply by repeated
multiplications with $E^{1}_k(1/c)$ and $E^{1}_k(c)$ when necessary.
\qed\end{proof}

We conclude this section with two examples that will familiarize us with the programming of
rtDVA(\textit{k})'s.

\begin{ex}
$\mathtt{UFIBONACCI}=\{a^n \mid $ n is a Fibonacci number$\}\in \mathfrak{L}\textup{(rtDVA(5))}$.
\end{ex}

\begin{proof}
We construct a rtDVA(5) $\mathcal{V}$ recognizing $\mathtt{UFIBONACCI}$ as follows: We let
the initial vector equal $[0,1,0,0,1]$. Reading each $a$, we multiply
the vector with the matrix $M_1$ if the first entry of the of the vector is
equal to 0, and with $M_2$ otherwise.

$$M_1=
\left [
\begin{array}{rrrrr}
0 & 0 & 0 & 0 & 0 \\
1 & 1 & 1 & 0 & 0\\
1 & 1 & 0 & 0 & 0 \\
-1 & 0 & 0 & 1 & 0 \\
-1 & 0 & 0 & 1 & 1
\end{array}
\right ]
M_2=
\left [
\begin{array}{rrrrr}
0&0&0&0&0 \\
1 & 1 & 0 & 0 & 0\\
0&0&1&0&0 \\
-1&0&0&1&0 \\
-1&0&0&1&1
\end{array} \right ]
.$$
After reading the $i$'th $a$, the
fourth entry of the vector equals $i$. The second and  third entries of
the vector hold consecutive Fibonacci numbers. The first entry is equal to 0
whenever $i$ equals the second entry, which triggers the next Fibonacci number to be computed and assigned to
the second entry in the  following step. Otherwise, the second and third entries remain unchanged until $i$
reaches the second entry. $\mathcal{V}$ accepts if the computation ends with the first
entry equaling 0, which occurs  if and only if the input length $n$ is a Fibonacci number.
\qed\end{proof}

\begin{theorem}\label{theorem:dva2}
$\mathtt{UGAUSS}=\{a^{n^2+n} \mid n \in \mathbb{N}\} \in \mathfrak{L}(\textup{rtDVA(2))}$.
\end{theorem}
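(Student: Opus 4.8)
The plan is to build a rtDVA(2) whose vector, after reading the $m$-th letter $a$, encodes both a running counter of $m$ and a ``target'' value $n^2+n$ for the current candidate $n$, in such a way that the first entry of the vector can be tested for equality to $1$ exactly when $m$ has reached the current target. Since the gaps between consecutive members of $\mathtt{UGAUSS}$ are $(n+1)^2+(n+1) - (n^2+n) = 2n+2$, which grows linearly, a single ``counter'' entry does not suffice on its own; instead I would let one coordinate carry a quantity that equals $1$ precisely at the accepting lengths and drifts away from $1$ otherwise, and use the equality test both to detect a hit and to reset the machinery for the next value of $n$. Concretely, I would track the difference $d_m = (n^2+n) - m$ for the current $n$: reading an $a$ decrements this difference by $1$, and when it hits $0$ (equivalently, when a suitably scaled version of it equals $1$), the transition on the next $a$ switches to a matrix that reloads the difference with the new gap $2n+2$ and simultaneously increments the stored value of $n$. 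Because each such update is an affine map on a fixed set of coordinates, and affine maps on $\mathbb{R}^{k-1}$ are realized by linear maps on $\mathbb{R}^{k}$ with a constant $1$ coordinate, two dimensions is exactly what is needed: one working coordinate plus the homogenizing $1$.

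The steps, in order, would be: (1) fix the initial vector so that the first coordinate flags the length $0 = 0^2+0$ (or handle the base case separately via states), and so that the second coordinate is the constant $1$; (2) write the two matrices $M_{=}$ and $M_{\neq}$ used on reading $a$ when the first entry is, respectively, equal to $1$ or not: $M_{\neq}$ performs $d \mapsto d-1$ (in the appropriate encoding) while leaving the bookkeeping for $n$ untouched, and $M_{=}$ performs the reset $d \mapsto 2n+2$ together with $n \mapsto n+1$; (3) verify by induction on $n$ that after the machine has ``fired'' the reset $n$ times, the working coordinate returns to the flag value exactly after $(n+1)^2+(n+1)$ total letters, using the identity $\sum_{j=0}^{n}(2j+2) = (n+1)^2+(n+1)$; (4) read the right endmarker with a transition that leaves the vector unchanged and lands in an accept state, so that acceptance happens iff the first entry is $1$ at the end, i.e.\ iff $|w| = n^2+n$ for some $n$. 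Along the way I may need to compose the raw affine updates with fixed scalings via $E^1_2(\cdot)$-type matrices (as in Theorem~1) to arrange that ``$d=0$'' is presented to the tester as ``first entry $=1$'', but by Theorem~1 such cosmetic adjustments cost nothing.

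The main obstacle is fitting \emph{two} independently evolving quantities — the decreasing difference $d_m$ and the slowly increasing gap parameter tied to $n$ — into a single non-homogeneous coordinate. A naive encoding wants three coordinates (counter, current gap, homogenizer). The trick that makes $k=2$ work is that we never need the raw value of $m$ or of $n$; we only need $d_m$, and the gap can be recovered from the \emph{accumulated drift} since the last reset. One clean way: let the working coordinate hold a value that, at step $m$, equals $1$ iff $m \in \mathtt{UGAUSS}$ and otherwise records how far past (or before) the last/next Gauss number we are, with the reset matrix on an equality-hit adding the correct linear increment using the constant coordinate. I would double-check that the equality test, which is \emph{forced} to read ``$=$'' whenever the coordinate is $1$, does not spuriously fire at non-Gauss lengths — this is the one place the construction could silently break, so the induction in step (3) must confirm that the working coordinate equals the flag value at \emph{exactly} the lengths $n^2+n$ and at no others.
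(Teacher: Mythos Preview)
Your proposal has a genuine gap, and you have in fact put your finger on it without resolving it. In the encoding you describe---one working coordinate holding the additive difference $d_m=(n^{2}+n)-m$ and a second coordinate pinned to the constant $1$---the vector at every moment the equality test fires is literally $[0,1]$ (or $[1,1]$ after the cosmetic shift from Theorem~1). A \emph{fixed} matrix $M_{=}$ applied to that vector therefore produces the same output regardless of which $n$ has been reached, so it cannot load the next gap $2n+2$, which depends on $n$. Your suggested escape, that ``the gap can be recovered from the accumulated drift since the last reset,'' does not work: the drift \emph{was} the gap, but by the time the test fires it has been decremented all the way to $0$ and the information is gone. One affine working coordinate together with a homogenizing $1$ simply cannot carry an unbounded parameter across a reset that always happens at the same vector value; your step~(2) as written needs three coordinates, not two.

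The paper's construction sidesteps this by abandoning both the additive encoding and the homogenizing constant. It uses the two coordinates as two \emph{multiplicative} counters holding powers of $2$ and lets them play ping-pong: while the active counter is halved toward $1$ (which the machine can test), the passive one is doubled, so that when the active counter hits $1$ the other one holds $2^{c}$ and thereby \emph{carries forward} exactly the information needed for the next round; the roles then swap, and one extra doubling at the turnaround supplies the ``$+2$'' in the gap $2c+2$. The idea you are missing is that the second coordinate must not be a dead constant but a live counter that accumulates, during the current phase, precisely the quantity the next phase will need---and a multiplicative encoding makes this possible in dimension $2$ without a third homogenizing coordinate.
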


\begin{proof}
We construct a rtDVA(2) $\mathcal{V}$ with initial vector $[1,1]$. If the input is the empty string, $\mathcal{V}$ accepts. Otherwise,
$\mathcal{V}$ increments the first entry of the vector by multiplying it by 2 on reading the first $a$ which is performed by multiplying the vector with the matrix $M_1=E^{1}_2(2)$.

$$M_1=
\left [
\begin{array}{rr}
2&0 \\
0&1
\end{array}
\right ]
$$
It then repeats the following procedure for the rest of the computation: Decrement the first entry of the vector by multiplying it by $1/2$ until it reaches one, while parallelly incrementing the second entry of the vector by multiplying it by 2 with the help of matrix $M_2$. The second entry stops increasing exactly when the first
counter reaches 1. Then the directions are swapped, with the second entry now being
decremented, and the first entry going up by multiplying the vector with the matrix $ M_3 $.
$$
M_2=
\left [
\begin{array}{rr}
\frac{1}{2}&0 \\
0 & 2
\end{array}
\right ]
M_3=
\left [
\begin{array}{rr}
2&0 \\
0 & \frac{1}{2}
\end{array}
\right ]
$$

When the second entry of the vector reaches 1, the first entry of the vector is multiplied by 2 one more time with the help of matrix $M_1$. Throughout this loop, the accept state is entered only when the
first entry of the vector is equal to 1.

Suppose that at some step, the value of the vector is $[1,2^c]$. If the input is sufficiently long,
$2c+2$ steps will pass before the first counter reaches 1 again, with the vector having the value
$[1,2^{c+1}]$. On an infinite sequence of $a$'s, the accept state will be entered after reading the second
$a$, and then again with intervals of $2c+2$ symbols between subsequent entrances, for $c=1,2,3...$. Doing
the sum, we conclude that strings of the form $a^{n^2+n}$, $n \in \mathbb{N}$, are accepted.
\qed\end{proof}

\section{Deterministic vector automata}

We start by specializing a fact stated by Ibarra et al. in \cite{ISK76} in the context of 1DFAM's to the case
of rtDVA(1)'s. For this purpose, we will use the following well-known
fact about counter machines.

\begin{fact}\cite{FMR68} \label{counterc}
 Given any $k$-counter automaton $A$ with the ability to alter the contents of
each counter independently by any integer between $+c$	 and $-c$ in a single
step (for some fixed integer $c$), one can effectively construct a $k$-counter automaton which can modify each
counter by at most one unit at every step, and which recognizes the same language as $A$ in precisely the same
number of steps.
\end{fact}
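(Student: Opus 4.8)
The plan is to handle each counter of the given automaton $A$ separately, replacing a counter that may jump by any amount in $\{-c,\dots,c\}$ per step with an ordinary counter that moves by at most one unit per step, while storing a bounded amount of extra information in the finite control. Concretely, for each counter of $A$ currently holding value $v$, I would fix the Euclidean representation $v=cq+r$ with $q\in\mathbb{Z}$ and $r\in\{0,1,\dots,c-1\}$, keep $q$ in a genuine counter of the new machine $B$, and carry the tuple of residues $(r_1,\dots,r_k)$ inside $B$'s state; since each $r_\ell$ ranges over a fixed finite set, the state set of $B$ is just $Q\times\{0,\dots,c-1\}^k$. Initially all counters of $A$ equal $0$, which corresponds to $q_\ell=0$ and $r_\ell=0$, so $B$ starts in a consistent configuration; and since acceptance of a rtD\textit{k}CA depends only on the state, $B$ accepts exactly when its $Q$-component is accepting.

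The crucial step is to verify that a sign test on $v$ can still be answered from the data $B$ has available. Because $0\le r<c$, one checks directly that $v=0$ iff $q=0$ and $r=0$; that $v>0$ iff $q\ge 1$, or ($q=0$ and $r\ge 1$); and that $v<0$ iff $q\le -1$. Hence the sign of $v$ is a function of the sign of $q$ (which $B$ reads off its counter) together with $r$ (which $B$ holds in its state), so at every step $B$ can reconstruct precisely the sign vector $\theta$ that $A$ would observe. It then consults $A$'s transition function to obtain the next state and the increment vector $(\delta_1,\dots,\delta_k)$ with $|\delta_\ell|\le c$. For each $\ell$, writing $r_\ell+\delta_\ell=c\,e_\ell+r_\ell'$ with $r_\ell'\in\{0,\dots,c-1\}$, one has $e_\ell\in\{-1,0,1\}$ since $r_\ell+\delta_\ell$ lies between $-c$ and $2c-1$; so $B$ changes its $\ell$-th counter by the single unit $e_\ell$ and updates its residue component to $r_\ell'$. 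This costs $B$ exactly one move per move of $A$ on the same input symbol, so $B$ is real-time and uses precisely the same number of steps, and an easy induction shows that its configuration always represents that of $A$, whence $\mathfrak{L}(B)=\mathfrak{L}(A)$.

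The only genuine obstacle is the sign-test step: once the value modulo $c$ has been relegated to the finite control, one must be sure that every sign test $A$ performs — in particular the delicate case $v=0$ — can still be resolved, and this is exactly what the residue bound $0\le r<c$ guarantees. Everything else is routine bookkeeping, and, importantly, none of it consumes an extra step, which is what makes the simulation faithful in the strong real-time sense demanded by the statement.
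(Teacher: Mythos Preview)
The paper does not supply its own proof of this fact; it is quoted verbatim as a known result from Fischer--Meyer--Rosenberg \cite{FMR68}, so there is nothing in the paper to compare your argument against.

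That said, your construction is correct and is essentially the classical one: split each counter value as $v=cq+r$ with $0\le r<c$, keep the quotient $q$ in a unit-step counter and the residue $r$ in the finite control, and observe that a single update $\delta\in\{-c,\dots,c\}$ changes $q$ by at most one. Your carry computation $r+\delta=ce+r'$ with $e\in\{-1,0,1\}$ is exactly right, and the invariant $v=cq+r$ is preserved step for step, which gives the ``same number of steps'' clause.

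One small remark: in the paper's model the counter status $\theta$ ranges over $\{0,\pm\}^k$, i.e.\ only a zero/nonzero test is available, not a full three-way sign test. Your analysis covers the stronger three-way case, which is fine, but for the model at hand you only need the simpler observation that $v=0$ iff $q=0$ and $r=0$; this is precisely what $B$ can detect from its own zero/nonzero counter test together with the stored residue. With that specialization your argument matches the intended one.
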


\begin{fact}\label{dva1-kca}
\textup{rtDVA(1)}'s are equivalent in language recognition power to real-time deterministic multicounter
automata
which can only check if all counters are equal to 0 simultaneously.
\end{fact}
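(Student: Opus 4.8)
The plan is to prove the two inclusions separately, in each case exploiting the classical correspondence between a register holding a rational number and the vector of exponents in its prime factorization, the latter behaving like a tuple of counters. The easy direction is the simulation of a restricted $k$-counter automaton $\mathcal{M}$ by a rtDVA(1): let $p_1,\dots,p_k$ be the first $k$ primes, start the register at $1$, and maintain the invariant that the register holds $p_1^{c_1}\cdots p_k^{c_k}$ whenever $\mathcal{M}$'s counters hold $(c_1,\dots,c_k)$ (negative counter values cause no trouble, since this product is a well-defined positive rational for any $c_i\in\mathbb{Z}$). A move of $\mathcal{M}$ that adds $\varepsilon_i\in\{-1,0,1\}$ to counter $i$ is mimicked by multiplying the register by the fixed rational $p_1^{\varepsilon_1}\cdots p_k^{\varepsilon_k}$, and the ``are all counters $0$?'' test of $\mathcal{M}$ is mimicked by the equality-to-$1$ test, since by unique factorization $p_1^{c_1}\cdots p_k^{c_k}=1$ iff every $c_i=0$; as this test is available also on the $\dollar$-step, whichever end-of-input convention the restricted model uses can be matched. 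The simulation is real-time, so the two machines recognize the same language.

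For the converse, given a rtDVA(1) $\mathcal{V}$, I would observe that its specification involves only finitely many rationals --- the initial register value and the multipliers appearing as matrix entries of $\delta$ --- so there is a fixed finite set of primes $p_1,\dots,p_m$ such that every register value $\mathcal{V}$ can reach is either $0$ (an absorbing value, reachable only when the initial value or some multiplier is $0$) or of the form $s\cdot p_1^{e_1}\cdots p_m^{e_m}$ with $s\in\{+1,-1\}$ and $e_j\in\mathbb{Z}$, and moreover each $e_j$ changes by at most a fixed constant $c$ per step. The simulating multicounter automaton keeps $s$ together with a ``register is $0$'' flag in its finite control and keeps $e_j$ in its $j$-th counter; multiplying the register by a given multiplier is then simulated by updating the control and adding fixed integers of absolute value at most $c$ to the counters, and the equality-to-$1$ test is simulated by ``$s=+1$, flag off, and all counters simultaneously $0$'', which uses only the permitted simultaneous zero test plus information held in the control. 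What results is a real-time deterministic multicounter automaton that alters each counter by an integer in $[-c,c]$ per step and uses only the simultaneous zero test; invoking Fact~\ref{counterc} (whose construction does not disturb the counter-testing regime) replaces it by an equivalent one with unit counter changes, recognizing the same language in the same number of steps.

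I expect the main obstacle to be the start-up mismatch in this second direction: a multicounter automaton must begin with all counters $0$, whereas $\mathcal{V}$'s register begins at the designer-chosen rational, whose $p_j$-adic valuation $e_j^{(0)}$ may be any (possibly large) integer. The remedy is to let the multicounter automaton, on its first step (reading the left endmarker $\cent$), install this fixed configuration by adding the constant $e_j^{(0)}$ to counter $j$ and setting the sign and zero-flag in its control; these increments are bounded by a constant, so the machine still falls within the scope of Fact~\ref{counterc}. Care is also needed throughout to keep the sign of the register, and the absorbing value $0$, in the finite control rather than in the counters, because the equality-to-$1$ test by itself conflates every negative register value and every positive register value other than $1$ --- a distinction the multicounter side can make only through its state.
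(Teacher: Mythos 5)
Your proof is correct and follows essentially the same route as the paper's: both directions are handled via the correspondence between a rational register value and the vector of exponents in its prime factorization, with Fact~\ref{counterc} invoked to reduce the constant-size counter updates to unit updates. You are in fact somewhat more careful than the paper, which silently assumes a positive initial register value and positive nonzero multipliers; your handling of the sign, the absorbing value $0$, and the designer-chosen initial value in the finite control fills genuine small gaps rather than changing the argument.
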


\begin{proof}
 Let  us simulate a given rtDVA(1) $\mathcal{V}$ by a real-time
deterministic multicounter automaton $\mathcal{M}$. Let
$S=\{m_1,m_2,...,m_{t}\}$ be the set of numbers the single-entry ``vector" can be multiplied with
during the computation. Let $P=\{p_1,p_2,...,p_k\}$ be the set of prime factors of the denominators and the numerators of the numbers in $S$. $\mathcal{M}$ will have $k$ counters
$c_1,...,c_k$ to represent the current value
of the vector. When $\mathcal{V}$ multiplies the vector with $n_i=\frac{a}{b}$, where
$a=p_1^{x_1}p_2^{x_2}\dots p_k^{x_k}$ and $b=p_1^{y_1}p_2^{y_2}\dots p_k^{y_k}$, the
counters of  $\mathcal{M}$ are updated by the values $(x_1-y_1,x_2-y_2,...,x_k-y_k)$. As stated
in Fact \ref{counterc}, we can update the counter values by any integer between
$c$ and $-c$, where $c$ here is equal to the largest exponent in the prime
decomposition of the numbers in $S$. When $\mathcal{V}$  checks if the value of the
vector is equal to 1,  $\mathcal{M}$ checks if the current value of the
counters  is $(0,0,...,0)$, since the value of the vector is
equal to 1 exactly when all the counters are equal to 0.

For the other direction, we should simulate a rtD\textit{k}CA  $\mathcal{M}$ that can only check if all
counters are equal to 0 simultaneously with a
rtDVA(1) $\mathcal{V}$. For each counter $c_i$ of $\mathcal{V}$, we assign a distinct
prime number $p_i$ for $i=1,...,k$. We multiply the ``vector" with $p_i$ and
$\frac{1}{p_i}$, when the $i$'th counter $c_i$ is incremented and decremented, respectively. Whenever
$\mathcal{M}$ has all counters equal to 0, $\mathcal{V}$'s vector has value 1, so it can mimic $\mathcal{M}$
as required.
\qed\end{proof}

We now prove a fact about rtD\textit{k}CA's that will be helpful in the separation of the classes of languages
associated with these machines and rtDVA(1)'s.

\begin{theorem}\label{rtDkCA_unary}
$\mathtt{UGAUSS}=\{a^{n^2+n} \mid n \in \mathbb{N}\}\in \mathfrak{L}(\textup{rtD2CA})$.
\end{theorem}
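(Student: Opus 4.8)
The plan is to reuse the automaton of Theorem~\ref{theorem:dva2} almost verbatim, trading its two–dimensional vector for two counters. Observe first that the matrices $M_1,M_2,M_3$ in that proof are all diagonal with diagonal entries drawn from $\{2,\tfrac12,1\}$ and the initial vector is $[1,1]$, so both entries of the vector remain powers of $2$ at every step of every computation. Accordingly, I would build a rtD2CA $\mathcal{M}$ that holds in its counter $c_i$ the base–$2$ logarithm of the $i$-th entry of that vector, for $i=1,2$. Multiplying the $i$-th entry by $2$, by $\tfrac12$, or by $1$ then translates into incrementing $c_i$, decrementing $c_i$, or leaving $c_i$ unchanged; hence each step alters each counter by at most one unit, and since exactly one input symbol is consumed per step, $\mathcal{M}$ is genuinely real time. (The counters stay non-negative throughout, mirroring the corresponding logarithms, though the rtD\textit{k}CA model would tolerate negative values as well.)

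The finite-state control of $\mathcal{M}$ simply copies the control implicit in the proof of Theorem~\ref{theorem:dva2} — the bookkeeping that records which ``direction'' is currently active (that is, whether $M_2$ or $M_3$ is being applied) and that schedules the one-off application of $M_1$. The only test the rtDVA(2) performs is whether the first entry of its vector equals $1$, and the first entry equals $1$ exactly when $c_1=0$; so that test is faithfully emulated by inspecting whether the sign of $c_1$ reported in $\theta$ is $0$, which a rtD\textit{k}CA may do at every step. Consequently, after reading $\cent a^{m}$ the counters of $\mathcal{M}$ encode precisely the vector held by the rtDVA(2) of Theorem~\ref{theorem:dva2} after reading $\cent a^{m}$; in particular, by the analysis carried out there, $c_1=0$ at that point if and only if $m=n^{2}+n$ for some $n\in\mathbb{N}$.

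The remaining point is acceptance, and this is the one step that needs a little care. A rtD\textit{k}CA decides acceptance purely by the state it is in after scanning $\dollar$, whereas the rtDVA(2) accepts according to the value of its vector at that moment. The fix is to notice that the $\dollar$-transition, like every other transition, still has access to the counter signs: give $\mathcal{M}$ a dedicated accept state $s_{\mathrm{acc}}$ and let it move to $s_{\mathrm{acc}}$ on reading $\dollar$ exactly when the sign of $c_1$ is $0$, and to a non-accepting sink state otherwise. This reproduces the acceptance condition ``first vector entry $=1$'' of Theorem~\ref{theorem:dva2}. Putting the three ingredients together, $\mathcal{M}$ accepts $a^{m}$ iff $m=n^{2}+n$, i.e.\ $L(\mathcal{M})=\mathtt{UGAUSS}$. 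The only things that genuinely have to be verified are this last maneuver and the earlier observation that the vector entries never leave the set of powers of $2$ (so that the logarithmic encoding is exact); everything else is routine.
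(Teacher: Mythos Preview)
Your proposal is correct and takes essentially the same route as the paper: the paper explicitly says ``the idea of the proof is the same with the proof of Theorem~\ref{theorem:dva2}'' and then spells out the two-counter ping-pong directly (increment one counter while decrementing the other, swap roles at zero, add one extra increment per full cycle, accept when $c_1=0$); you arrive at the very same machine by taking base-$2$ logarithms of the vector entries of the rtDVA(2). The only cosmetic difference is that the paper's description reads the signs of \emph{both} counters to drive the phase changes (it detects $c_2=0$ to schedule the extra $M_1$-style increment), whereas you argue that inspecting $c_1$ alone suffices because that is all the rtDVA(2) can test---either way one obtains a correct rtD2CA for $\mathtt{UGAUSS}$.
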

\begin{proof}

We construct a real-time deterministic automaton $\mathcal{M}$ with two
counters recognizing $\mathtt{UGAUSS}$. The idea of the proof is the same with the proof of Theorem \ref{theorem:dva2}. If the input is the empty string, $\mathcal{M}$ accepts. Otherwise,
$\mathcal{M}$ increments the first counter on reading the first $a$. It then repeats the following procedure
for the rest of the computation: Decrement the first counter until it reaches zero, while parallelly
incrementing the second counter.  The second counter stops increasing exactly when the first counter reaches
0. The counters then swap directions, with the second counter now being decremented, and the first counter
going up. When the second counter reaches 0, the first counter is incremented one more time.

Throughout this loop, the accept state is entered only when the first counter is zero.

Suppose that at some step, the value of the counter pair is $(0,c)$. If the input is sufficiently long, $2c+2$
steps will pass before the first counter reaches zero again, with the pair having the value $(0,c+1)$. On an
infinite sequence of $a$'s, the accept state will be entered after reading the second $a$, and then again with
intervals of $2c+2$ symbols between subsequent entrances, for $c=1,2,3...$. Doing the sum, we conclude that
strings of the form $a^{n^2+n}$, $n \in \mathbb{N}$, are accepted.
\qed\end{proof}

For $k \geq 1$, let $\mathtt{LNG}_k=\{w \in \{a_0,a_1,...,a_k\}^{*} \mid
|w|_{a_0}=|w|_{a_1}=...=|w|_{a_k}\}$,
where $|w|_x$ denotes the number of occurrences of symbol $x$ in $w$.

\begin{fact}\label{fact:laing} \cite{La67}
$\mathtt{LNG}_k\in\mathfrak{L}$\textup{(rtD\textit{k}CA)}, and
$\mathtt{LNG}_k\notin\mathfrak{L}$\textup{(rtD($k$-1)CA)}, for every $k\geq 1$.
\end{fact}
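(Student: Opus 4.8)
I would prove the two directions separately. For the inclusion $\mathtt{LNG}_k \in \mathfrak{L}$\textup{(rtD\textit{k}CA)}, I would build a machine with $k$ counters $c_1,\dots,c_k$ maintaining, after every prefix $x$, the invariant $c_i = |x|_{a_i} - |x|_{a_0}$: reading $a_i$ increments $c_i$ (for $1\le i\le k$), and reading $a_0$ decrements all $k$ counters at once, each move being by a single unit as the model permits. No sign information is consulted until the endmarker; on scanning $\dollar$ the machine examines the sign vector $\theta$ and routes to an accept state precisely when $\theta=(0,\dots,0)$, that is, when $|w|_{a_0}=|w|_{a_1}=\dots=|w|_{a_k}$.

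For the non-inclusion $\mathtt{LNG}_k \notin \mathfrak{L}$\textup{(rtD($k$-1)CA)} --- the real content --- I would argue by contradiction. Assume a rtD($k$-1)CA $\mathcal{M}$ with $s$ states recognizes $\mathtt{LNG}_k$, fix a large integer $N$, and for each $\vec n=(n_1,\dots,n_k)\in\{0,1,\dots,N\}^k$ feed $\mathcal{M}$ the prefix $u_{\vec n}=a_1^{n_1}a_2^{n_2}\cdots a_k^{n_k}$, letting $C(\vec n)$ be the configuration (state together with the $k-1$ counter values) reached right after $\cent u_{\vec n}$ is consumed. Since at most $kN+1$ steps have then occurred and each counter changes by at most one unit per step, every counter value in $C(\vec n)$ lies in $\{-(kN+1),\dots,kN+1\}$, so at most $s\,(2kN+3)^{k-1}$ configurations are available. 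This bound grows only like $N^{k-1}$, while there are $(N+1)^k$ prefixes $u_{\vec n}$; hence it suffices to show that $\vec n \mapsto C(\vec n)$ is injective, since then $(N+1)^k \le s\,(2kN+3)^{k-1}$, which is false for $N$ large enough.

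Injectivity is where the commutativity of $\mathtt{LNG}_k$ enters, and it is the step needing the most care. Because $\mathcal{M}$ is deterministic, $C(\vec n)$ determines, for every completion $\gamma\in\{a_0,\dots,a_k\}^{*}$, whether $u_{\vec n}\gamma$ is accepted, and hence determines the set $A(\vec n)=\{\gamma \mid u_{\vec n}\gamma\in\mathtt{LNG}_k\}$. By commutativity, $u_{\vec n}\gamma\in\mathtt{LNG}_k$ iff $|\gamma|_{a_0}=n_1+|\gamma|_{a_1}=\dots=n_k+|\gamma|_{a_k}$, so every $\gamma\in A(\vec n)$ satisfies $n_i=|\gamma|_{a_0}-|\gamma|_{a_i}$ for all $i$; since $A(\vec n)$ is nonempty (take $\gamma$ with $|\gamma|_{a_0}=\max_i n_i$ and $|\gamma|_{a_i}=|\gamma|_{a_0}-n_i$), any member of $A(\vec n)$ exhibits $\vec n$, so $A(\vec n)$ --- a function of $C(\vec n)$ --- determines $\vec n$, giving injectivity. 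I expect the only subtlety to be this use of the full family of completions: probing merely with suffixes $a_k^{m}$ separates just the ``balanced'' prefixes with $n_1=\dots=n_k$ and pools all others under the empty residual language, distinguishing only $O(N)$ prefixes and yielding no contradiction, whereas reading the complete Parikh data back out of the residual language is exactly what creates the mismatch between the $N^k$ prefixes and the $N^{k-1}$ configurations that $k-1$ counters can afford. The base case $k=1$ is the same argument with a DFA, whose $s$ configurations cannot tell apart $a_1^0,a_1^1,\dots,a_1^N$.
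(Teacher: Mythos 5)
The paper states this as a Fact imported from \cite{La67} and gives no proof of its own, so there is nothing to compare against line by line; your self-contained argument is correct and is essentially the standard one for such counter hierarchies: the positive direction by maintaining the differences $|x|_{a_i}-|x|_{a_0}$ and testing the sign vector on \$, and the negative direction by the counting/residual-language argument, where the key point --- that the nonempty residual set $A(\vec n)$ recovers $\vec n$ via $n_i=|\gamma|_{a_0}-|\gamma|_{a_i}$, so the $(N+1)^k$ prefixes need pairwise distinct configurations while only $O(N^{k-1})$ are reachable in real time with $k-1$ counters --- is exactly the right use of commutativity. Your aside about why probing with a single suffix family would fail is accurate, and the $k=1$ base case reduces correctly to non-regularity.
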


\begin{fact}\label{fact:ISK}
\cite{ISK76}
\textup{1DFAM}'s can only recognize regular languages on unary alphabets.
\end{fact}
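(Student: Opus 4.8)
The plan for re-deriving this fact (proved in \cite{ISK76}) is as follows. \emph{Normalization.} Exactly as in the proof of Fact~\ref{dva1-kca}, the register of a 1DFAM can be viewed as a tuple of $\mathbb{Z}$-valued counters — the exponents of the finitely many primes occurring in numerators or denominators of elements of $\Gamma$ — with ``register $=1$'' meaning ``all counters $0$'' (the sign of the register, and whether it has ever become $0$, are kept in the finite control). I would also argue $M$ may be assumed to halt on every input: if some computation fails to halt, a short analysis of the cell at which it gets stuck shows $L$ agrees with a regular language off a finite set. Halting, determinism, and the unary alphabet together force at most $2|Q|$ steps per tape cell (otherwise a state/equality-bit pair recurs with the head in place, i.e.\ a loop), so one step of the input head multiplies the register by one of only finitely many rationals.

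\emph{Dichotomy and the recurrent case.} Fix the single infinite computation of $M$ on $a^{\omega}$; the run on any $a^{n}$ coincides with it until the head reaches $\dollar$. Either (A) the register equals $1$ at infinitely many steps of this infinite run, or (B) at only finitely many. In case (A) some state $q$ recurs among the register-$1$ steps; pick two of them, with the head on cells $j<j'$. Since the register, hence the equality bit, is $1$ at both, and only cells in $[j,j')$ are visited in between, reading $a^{\,j'-j}$ carries the partial configuration ``state $q$, register $1$, head on cell $p$'' to ``state $q$, register $1$, head on cell $p+(j'-j)$'' for every $p\ge j$: a genuine resettable loop. Hence the predicate ``$M$ accepts $a^{n}$'' is eventually periodic in $n$, and an ultimately periodic unary language is regular.

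\emph{The oblivious case.} In case (B), let the register be $1$ for the last time with the head on cell $j_{0}$. After that step the equality bit is permanently $\neq$, so $M$ never again consults its register: past cell $j_{0}$ it is an ordinary deterministic one-way (pausing) finite automaton over $a$'s that merely accumulates a multiplier. Within $|Q|$ more cells its control enters a periodic regime — a state $q^{*}$ recurring with period $r$, over which the register is multiplied by a fixed rational $\rho$ — so for large $n$ the register reaching $\dollar$ equals $c_{s}\,\rho^{\lfloor(n-n_{0})/r\rfloor}$ (with $n_{0}$ a fixed offset and $c_{s}$ depending only on $s\equiv n\bmod r$), the control state reaching $\dollar$ likewise depends only on $s$, and one further block of steps over $\dollar$ (oblivious once $n$ is large) fixes the outcome. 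If $\rho=\pm1$ the outcome is ultimately periodic in $n$; if $|\rho|\neq1$, then for each $s$ the requirement ``final register $=1$'' reads $\rho^{k}=\mathrm{const}_{s}$ and has at most one solution $k$, so only finitely many long words lie in $L$. In every case $L$ is regular.

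\emph{Expected obstacle.} The conceptual content is only these two structural dichotomies, plus the triviality that $k\mapsto\rho^{k}$ is injective when $|\rho|\neq1$; the real friction is the bookkeeping forced by head-pausing — making the ``bounded steps per cell, or else a loop'' normalization precise, and verifying that the register's visits to $1$, the loop of case (A), and the pre-period of case (B) are the same for all sufficiently long inputs, which is exactly what the shared prefix with the $a^{\omega}$-computation provides.
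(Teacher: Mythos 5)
The paper does not actually prove this statement; it is imported verbatim from \cite{ISK76}, so there is no in-paper argument to compare against, and your reconstruction must stand on its own. Its skeleton is reasonable --- run the machine on $a^{\omega}$, split on whether the register returns to $1$ infinitely often, and in the recurrent case extract a period from a repeated (state, register $=1$) configuration at two distinct cells; that half is essentially correct and, as it happens, does not even need a step-per-cell bound. But two of your supporting claims are respectively false and unproved. First, the normalization ``at most $2|Q|$ steps per cell, else a state/equality-bit pair recurs in place and we have a loop'' is wrong: if $(q,\neq)$ recurs on the same cell, the two underlying register values may differ, and since the future depends on future equality tests rather than on the current bit alone, the two continuations can diverge. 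A machine with $\delta(q,a,\neq)=(q,\downarrow,1/2)$ and $\delta(q,a,=)=(q',\rightarrow,1)$ spends $m$ steps on a cell it enters with register $2^{m}$, so the time spent per cell --- and hence the per-cell multiplier --- is genuinely unbounded. This does not sink case (A), and in case (B) obliviousness after the last register-$1$ step restores a $|Q|$ bound, but the lemma as you state it is incorrect.

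Second, and more seriously, the endmarker block in case (B) is not oblivious for large $n$, and your parenthetical simply assumes it is. The register reaching $\dollar$ is $c_{s}\rho^{k}$ with $k$ growing in $n$, and while pausing on $\dollar$ the machine may keep testing equality; for instance it may divide by $2$ on $\dollar$ until the register equals $1$ and only then move off and accept. Acceptance is then not ``$\rho^{k}=\mathrm{const}_{s}$, at most one solution'' but ``$c_{s}\rho^{k}\in\{\beta\,\pi^{-j}\mid j\ge 0\}$'' for the cycle multiplier $\pi$ of the $\dollar$-loop, together with a condition on the phase $j\bmod(\text{cycle length})$ at which $1$ is first reached. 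One must show that the set of such $k$ is ultimately periodic (a computation in the free abelian group of prime-exponent vectors of $\rho$ and $\pi$, plus sign and zero bookkeeping). That is work of the same order as your main dichotomy and cannot be discharged by ``one further block of steps fixes the outcome''; until it is carried out, case (B) has a genuine gap.
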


We are now able to state several new facts about the computational power of
rtDVA(\textit{k})'s:

\begin{theorem}
For any fixed $k>0$, $\mathfrak{L}$\textup{(rtDVA(1))} and $\mathfrak{L}$\textup{(rtD\textit{k}CA)}
are incomparable.
\end{theorem}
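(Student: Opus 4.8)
The plan is to prove incomparability in the usual way: for each fixed $k>0$ I will exhibit a language recognized by a rtDVA(1) but by no rtD$k$CA, and another recognized by some rtD$k$CA but by no rtDVA(1).

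For the first direction I will take the Laing language $\mathtt{LNG}_{k+1}$. Fact \ref{fact:laing}, applied with its parameter set to $k+1$, already gives $\mathtt{LNG}_{k+1}\notin\mathfrak{L}(\textup{rtD}k\textup{CA})$. To place it in $\mathfrak{L}(\textup{rtDVA(1)})$ I will use a one-state machine with initial vector $[1]$: fixing $k+1$ distinct primes $p_1,\dots,p_{k+1}$, on reading $a_0$ it multiplies its register by $p_1p_2\cdots p_{k+1}$, and on reading $a_i$ (for $1\le i\le k+1$) it multiplies by $1/p_i$. After processing $w$ the register equals $\prod_{i=1}^{k+1}p_i^{\,|w|_{a_0}-|w|_{a_i}}$, which by unique factorization is $1$ exactly when $|w|_{a_0}=|w|_{a_1}=\dots=|w|_{a_{k+1}}$; as the only state is accepting, the machine recognizes $\mathtt{LNG}_{k+1}$.

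For the second direction, when $k\ge 2$ the natural witness is $\mathtt{UGAUSS}$: it lies in $\mathfrak{L}(\textup{rtD2CA})\subseteq\mathfrak{L}(\textup{rtD}k\textup{CA})$ by Theorem \ref{rtDkCA_unary}, and since its gaps grow without bound it is non-regular, so --- a rtDVA(1) over a unary alphabet being a special case of a 1DFAM --- it cannot be recognized by any rtDVA(1) by Fact \ref{fact:ISK}. For $k=1$ this witness is unavailable, since recognizing $\mathtt{UGAUSS}$ in real time genuinely requires two counters, so there I will instead use $L_{\neq}=\{w\in\{a,b\}^*\mid |w|_a\neq|w|_b\}$, which sits in $\mathfrak{L}(\textup{rtD1CA})\subseteq\mathfrak{L}(\textup{rtD}k\textup{CA})$: keep the difference $|w|_a-|w|_b$ in the single counter and, on scanning $\dollar$, move to an accept state precisely when the counter's sign is nonzero. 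Since $L_{\neq}\in\mathfrak{L}(\textup{rtD1CA})$, it in fact settles the second direction for every $k\ge 1$, so the case split is only a matter of presentation.

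The step that will require the most care is showing $L_{\neq}\notin\mathfrak{L}(\textup{rtDVA(1)})$. Suppose a rtDVA(1) $\mathcal{V}$ recognized it, and let $(q_n,r_n)$ be the state and register value of $\mathcal{V}$ after reading $\cent a^n$. Since $a^n\in L_{\neq}$ for every $n\ge1$, reading $\dollar$ from $(q_n,r_n)$ must leave the register equal to $1$; writing the scalar multiplier in that final transition as $\mu_n$ (it depends only on $q_n$ and on whether $r_n=1$, so it ranges over a finite set), we get $r_n\mu_n=1$, whence $r_n$, and hence $(q_n,r_n)$, takes only finitely many values. Therefore some configuration $(q^*,r^*)$ recurs for infinitely many $n$; choosing two such indices $n<n'$ and running $\mathcal{V}$ from $(q^*,r^*)$ on the common continuation $b^n\dollar$, determinism forces $\mathcal{V}$ to accept $a^nb^n$ if and only if it accepts $a^{n'}b^n$ --- but $a^nb^n\notin L_{\neq}$ while $a^{n'}b^n\in L_{\neq}$, a contradiction. (One could instead pass through the restricted multicounter model of Fact \ref{dva1-kca}, but the direct register-pumping argument above seems cleaner.) Combining the two directions yields incomparability for every fixed $k>0$.
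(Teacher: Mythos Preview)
Your proof is correct, and for the first direction (the $\mathtt{LNG}_{k+1}$ witness) and for the second direction when $k\ge 2$ (the $\mathtt{UGAUSS}$ witness combined with Fact~\ref{fact:ISK}) it coincides with the paper's argument essentially verbatim. Where you diverge is at $k=1$: the paper invokes Theorem~\ref{rtDkCA_unary} uniformly across all $k$, simply asserting that ``rtD\textit{k}CA's can recognize some nonregular languages on a unary alphabet,'' whereas Theorem~\ref{rtDkCA_unary} only supplies such a language for two counters (and hence for $k\ge 2$). You instead produce a separate binary-alphabet witness $L_{\neq}=\{w:|w|_a\neq|w|_b\}$ that already lies in $\mathfrak{L}(\textup{rtD1CA})$, and rule it out of $\mathfrak{L}(\textup{rtDVA(1)})$ by a clean register-pumping argument: acceptance of every $a^n$ forces the register after $\cent a^n$ to be the reciprocal of one of finitely many $\dollar$-multipliers, so some full configuration recurs and a fooling pair $a^nb^n$, $a^{n'}b^n$ finishes the job. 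This is a genuine addition over the paper's write-up; it makes the $k=1$ case self-contained without appealing to any unary nonregular one-counter language, and since $L_{\neq}$ works for every $k\ge 1$, your second-direction argument is in fact more uniform than the paper's.
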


\begin{proof}
From Fact \ref{fact:laing}, we know that $\mathtt{LNG}_{k+1}$ can not be recognized by
any rtDkCA.
 We can construct a rtDVA(1) $\mathcal{V}$ recognizing $\mathtt{LNG}_{k+1}$ as follows:
 We
choose $k+1$ distinct prime numbers $p_1,...,p_k,p_{k+1}$, each corresponding to a different symbol $a_i$ in
the input alphabet, where
$i\in\{1,...,k+1\}$.
When it reads an $a_i$ with $i$ in that range, $\mathcal{V}$ multiplies its single-entry vector with $p_i$.
When it
reads an $a_{0}$, $\mathcal{V}$ multiplies the vector with $\frac{1}{p_1\cdot p_2\cdot...\cdot
p_kp_{k+1}}$.
The input string $w$ is accepted if the value of the vector is equal to 1 at the end of the
computation, which is the case if and only if $w\in\mathtt{LNG}_{k+1}$.
We conclude that $\mathtt{LNG}_{k+1} \in \mathfrak{L}$(rtDVA(1)).

From Theorem \ref{rtDkCA_unary}, we know that rtDkCA's can recognize some nonregular
languages on a unary alphabet. By Fact \ref{fact:ISK}, we know that rtDVA(1)'s, which  are additionally
restricted 1DFAM's, can only recognize regular languages in that case. Hence, we conclude that the
two models are incomparable.
\qed\end{proof}

\begin{theorem}\label{VA-kCA}
$\mathfrak{L}(\textup{rtDVA(1))} \subsetneq \bigcup_k \mathfrak{L}(\textup{rtD\textit{k}CA})$.
\end{theorem}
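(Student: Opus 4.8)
The plan is to establish the two parts of the strict inclusion separately: first containment, then strictness. For containment, I would start from Fact~\ref{dva1-kca}, which tells us that every rtDVA(1) is equivalent to a real-time deterministic multicounter automaton whose only test is ``are all counters simultaneously zero?'' Such a machine is in particular a rtD\textit{k}CA (the restricted zero-test is a special case of the sign-testing capability of a general multicounter automaton), so $\mathfrak{L}(\textup{rtDVA(1))} \subseteq \mathfrak{L}(\textup{rtD}k\textup{CA})$ for the $k$ produced by the simulation, and hence $\mathfrak{L}(\textup{rtDVA(1))} \subseteq \bigcup_k \mathfrak{L}(\textup{rtD}k\textup{CA})$.

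For strictness, I would exhibit a language that lies in $\bigcup_k \mathfrak{L}(\textup{rtD}k\textup{CA})$ but not in $\mathfrak{L}(\textup{rtDVA(1))}$. The natural candidate is $\mathtt{UGAUSS}=\{a^{n^2+n}\mid n\in\mathbb{N}\}$: by Theorem~\ref{rtDkCA_unary} it is recognized by a rtD2CA, so it belongs to the union; on the other hand, a rtDVA(1) is a restricted 1DFAM, and by Fact~\ref{fact:ISK} a 1DFAM on a unary alphabet can recognize only regular languages. Since $\mathtt{UGAUSS}$ is a nonregular unary language, it is not in $\mathfrak{L}(\textup{rtDVA(1))}$. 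Combining the two parts gives the strict inclusion.

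The one point that needs care — and the only real obstacle — is justifying that a rtDVA(1) genuinely is (a special case of) a 1DFAM, so that Fact~\ref{fact:ISK} applies. This is essentially definitional: a 1DFAM has a rational-valued register initialized to $1$, multiplies it by a multiplier from a finite set $\Gamma$ at each step, and tests it for equality to $1$; a rtDVA(1) is exactly this with the head forced to move right at every step (real-time) and with the ``matrix'' being a single rational number. I would spell out this correspondence in a sentence or two, noting that the finite set of matrices appearing in $\delta$ supplies the finite multiplier set $\Gamma$, and that the acceptance conditions coincide. With that observation in place, the rest of the argument is immediate from the cited facts and theorems.
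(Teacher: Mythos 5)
Your proposal is correct and follows essentially the same route as the paper: containment via the simulation in Fact~\ref{dva1-kca}, and strictness via $\mathtt{UGAUSS}$ (Theorem~\ref{rtDkCA_unary}) combined with Fact~\ref{fact:ISK} applied to rtDVA(1)'s viewed as restricted 1DFAM's. Your extra remark spelling out why a rtDVA(1) is a special case of a 1DFAM is a reasonable addition, but it matches what the paper takes as definitional.
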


\begin{proof}
By the argument in the proof of Fact \ref{dva1-kca}, any rtDVA(1) can be simulated by a
rtD\textit{k}CA for some $k$. The inclusion is proper, since we know that a rtD2CA can
recognize a nonregular language on a unary alphabet (Theorem \ref{rtDkCA_unary}), a feat that is impossible
for rtDVA(1)'s by Fact \ref{fact:ISK}.
\qed\end{proof}

\begin{theorem}\label{theorem:lgrstar}
$\mathfrak{L}$\textup{(rtDVA(2))}
 $\nsubseteq \bigcup_k \mathfrak{L}\textup{(rtD\textit{k}CA)}$.
\end{theorem}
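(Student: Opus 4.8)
The plan is to separate the two classes with a single witness language; I would use $\mathtt{POWER}=\{b^{2^{n}}a^{n}\mid n\ge 0\}$. Reading the long block of $b$'s \emph{before} the short block of $a$'s is the key design choice: a vector automaton can halve its register and so verify the exponential relationship on the fly, whereas a multicounter machine reading the $b$'s has not yet seen how many $a$'s will follow, and then gets only $n$ steps — far too few — to check anything about a $b$-count that is as large as $2^{n}$.

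First I would place $\mathtt{POWER}$ in $\mathfrak{L}(\textup{rtDVA(2)})$ by a construction in the spirit of Theorem~\ref{theorem:dva2}. Take the initial vector $[0,1]$; a few control states enforce the format $b^{*}a^{*}$. While reading $b$'s, keep the second entry fixed at $1$ and add it to the first entry at each step (the matrix $\left[\begin{smallmatrix}1&0\\1&1\end{smallmatrix}\right]$), so that after $b^{s}$ the first entry equals $s$. While reading $a$'s, multiply the first entry by $\tfrac12$, i.e. apply $E^{1}_{2}(1/2)$. Accept iff the machine halts in an accepting control state with first entry equal to $1$. After $b^{s}a^{n}$ the first entry is $s/2^{n}$, which is $1$ exactly when $s=2^{n}$, so this rtDVA(2) recognizes $\mathtt{POWER}$.

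Second, the lower bound: $\mathtt{POWER}\notin\bigcup_{k}\mathfrak{L}(\textup{rtD\textit{k}CA})$. Suppose a rtD\textit{k}CA $\mathcal{M}$ with $k$ counters recognizes it; run $\mathcal{M}$ on $b^{\omega}$ and let $D_{s}$ be the configuration after $b^{s}$. Because $\mathcal{M}$ recognizes $\mathtt{POWER}$, feeding the suffix $a^{m}\dollar$ to $D_{s}$ leads to acceptance exactly when $s=2^{m}$. The easy sub-case is when all counters stay bounded along the run on $b^{\omega}$: then that run is ultimately periodic with some period $p$, so $D_{2^{n}}=D_{2^{n}+p}$ for all large $n$, whence $b^{2^{n}}a^{n}$ and $b^{2^{n}+p}a^{n}$ get the same verdict; but the first is in $\mathtt{POWER}$ and the second is not, since its $b$-count is $2^{n}+p\ne 2^{n}$ — a contradiction.

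The hard sub-case — some counter unbounded along the run on $b^{\omega}$ — is the step I expect to be the main obstacle. The handle I would use is that over a window of only $n$ steps the sign of any counter of magnitude exceeding $n$ is frozen, so the outcome of feeding a suffix $a^{n}\dollar$ to a configuration is determined by a bounded-size summary of it (control state, signs of the large counters, and the exact values of those counters that are at most $n$ in magnitude); one then wants $n<n'$ for which $D_{2^{n}}$ and $D_{2^{n'}}$ carry the same summary relative to the length-$n$ suffix, which makes $b^{2^{n}}a^{n}$ and $b^{2^{n'}}a^{n}$ share a verdict although only the former lies in $\mathtt{POWER}$. The delicate point is that a plain pigeonhole on these summaries need not yield a collision that involves the distinguished index $n$; closing that gap is exactly the ``the counters are always lagging'' phenomenon — to mark the positions $2^{0},2^{1},2^{2},\dots$ one needs a quantity of size about $2^{j}$ ready at step $2^{j}$, but rebuilding it after each use costs more than the $2^{j-1}$ steps available before the next mark — and I expect it to require an amortized counting argument over a suitably chosen range of values of $n$ rather than a one-line pigeonhole. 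With the rtDVA(2) construction and both sub-cases of the lower bound, Theorem~\ref{theorem:lgrstar} follows.
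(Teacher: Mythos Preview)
Your rtDVA(2) construction for $\mathtt{POWER}=\{b^{2^{n}}a^{n}\mid n\ge 0\}$ is correct, but the lower bound is not merely incomplete---it is false. The language $\mathtt{POWER}$ \emph{is} recognized by a rtD3CA, so it cannot serve as a witness.

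Here is the counter-machine. Use $c_1,c_2$ to run a doubling bounce during the $b$-block and $c_3$ to count the number of bounces. Initialize $c_1\leftarrow 1$ on $\cent$. In mode $q_1$ on $b$: if $c_1\ne 0$ then $c_1\!-\!1$, $c_2\!+\!2$; if $c_1=0$ switch to $q_2$, do $c_2\!-\!1$, $c_1\!+\!2$, $c_3\!+\!1$. Mode $q_2$ is symmetric. A short trace shows that the mode switch occurs exactly at positions $s=2,4,8,16,\dots$, and after reading $b^{2^{m}}$ one has $c_3=m$ with the finite control in a ``just-switched'' state (the case $m=0$ is detected by ``never switched and $c_1=0$''). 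When the first $a$ arrives, reject unless the $b$-block ended at one of these marked positions; during the $a$-block decrement $c_3$ (rejecting if it would go negative); on $\dollar$ accept iff $c_3=0$. Increments of $\pm 2$ are admissible by Fact~\ref{counterc}. This accepts exactly $\mathtt{POWER}$, so your ``hard sub-case'' cannot be completed---the obstacle you sensed is real because the claim is wrong, not because the argument is subtle. The same bounce also shows that the unary language $\{a^{2^{n}}\}$ is in $\mathfrak{L}(\textup{rtD2CA})$, so exponential spacing alone does not defeat real-time multicounter machines.

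The paper separates the classes with a different witness, $\mathtt{GEQ}^{*}$ where $\mathtt{GEQ}=\{a^{m}b^{n}\mid m\ge n\ge 1\}$. The lower bound is quoted from \cite{FMR67}: no rtD$k$CA recognizes $\mathtt{GEQ}^{*}$ because such a machine cannot reset a counter to $0$ in one step, and the Kleene star forces an unbounded reset at each new $a$-block. The matching rtDVA(2) uses the first entry as a counter (increment on $a$, decrement and test on $b$) and, crucially, multiplies by a reset matrix $\left[\begin{smallmatrix}0&0\\1&1\end{smallmatrix}\right]$ when a fresh $a$-block begins. The point is structural rather than growth-rate based: vector automata can zero an entry instantaneously, multicounter automata cannot.
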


\begin{proof}
Let $\mathtt{GEQ}=\{a^mb^n | m\geq n \geq 1\}$, and let $\mathtt{GEQ}^*$ be the
Kleene closure of $\mathtt{GEQ}$. It is known that no rtDkCA can recognize $\mathtt{GEQ}^*$ for any $k$,
due to the inability of these machines to set a counter to 0 in a single step \cite{FMR67}.

 We will construct a rtDVA(2) $\mathcal{V}$ that recognizes $\mathtt{GEQ}^*$. The idea is to use the first
entry of the vector as a counter, and employ matrix multiplication to set this counter to 0 quickly when
needed.  $\mathcal{V}$ rejects strings that are not in the regular set $(a^{+}b^{+})^{*}$ easily. The vector
starts out as $[0,1]$. When it reads an $a$, $\mathcal{V}$ multiplies the vector with the ``incrementation"
matrix $M_a$ to increment the counter. When reading a $b$, $\mathcal{V}$ rejects if the first entry is zero,
since this indicates that there are more $b$'s than there were $a$'s in the preceding segment. Otherwise, it
multiplies the vector with the ``decrementation" matrix $M_b$.
 $$M_a=
\left [
\begin{array}{rr}
1&0 \\
1 & 1 \\
\end{array}
\right ]
M_b=
\left [
\begin{array}{rr}
1&0 \\
-1 & 1
\end{array}
\right ]
$$
When an $a$ is encountered immediately after a $b$, the counter has to be reset to 0, so the $M_a$ in the
processing of such $a$'s is preceded by the "reset" matrix $M_0$.
$$M_0=
\left [
\begin{array}{rr}
0 & 0 \\
1 & 1
\end{array}
\right ]
$$
$\mathcal{V}$ accepts if it reaches the end of the input without rejecting.
\qed\end{proof}

We are now able to compare the power of rtDVA(1)'s with their one-way versions, namely, the 1DFAM's of Ibarra
et al.
\cite{ISK76}

\begin{theorem}
$\mathfrak{L}$\textup{(rtDVA(1))} $\subsetneq$  $\mathfrak{L}$\textup{(1DFAM)}.
\end{theorem}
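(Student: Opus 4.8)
The plan is to prove this in two parts: first the inclusion $\mathfrak{L}(\textup{rtDVA(1)}) \subseteq \mathfrak{L}(\textup{1DFAM})$, and then the strictness. For the inclusion, I would observe that a rtDVA(1) is essentially a special case of a 1DFAM: a 1DFAM has a register holding a rational number and a finite multiplier set $\Gamma$, it can test the register for equality to $1$, and it accepts when it enters an accept state with register value $1$ after the endmarker --- these are precisely the features of a rtDVA(1), whose single-entry ``vector'' behaves exactly like the register, with the only extra freedom on the 1DFAM side being the ability to pause ($\downarrow$) the input head. So given a rtDVA(1) $\mathcal{V}$, I would build a 1DFAM that never uses the $\downarrow$ direction, always moving right, with the same states, multipliers (the entries appearing in the $1\times 1$ matrices of $\mathcal{V}$), equality tests, and accept condition. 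One small wrinkle: the rtDVA(1) is allowed to choose an arbitrary rational \emph{initial} vector value $v_1$, whereas the 1DFAM's register always starts at $1$; this is handled by adding a single new initial state whose only transition (on reading $\cent$) multiplies the register by $v_1$ before handing control to the simulation of $q_0$. This gives the inclusion.

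For properness, I would exhibit a language recognized by some 1DFAM but by no rtDVA(1). The natural candidate is a nonregular unary language. By Fact~\ref{fact:ISK}, rtDVA(1)'s --- being restricted 1DFAM's --- recognize only regular languages over a unary alphabet, so \emph{any} nonregular unary language in $\mathfrak{L}(\textup{1DFAM})$ separates the classes. I would therefore show that $\mathtt{UGAUSS} = \{a^{n^2+n} \mid n \in \mathbb{N}\}$, or another convenient nonregular unary language, lies in $\mathfrak{L}(\textup{1DFAM})$. The cleanest route is to note that a 1DFAM, unlike a rtDVA(1), can pause its head, so it can run the two-counter construction of Theorem~\ref{rtDkCA_unary} (or equivalently simulate a general multicounter machine) using the register to encode counter contents via prime powers: reading each $a$ it does bookkeeping, pausing as needed, multiplying the register by the appropriate prime or its reciprocal, and using the equality-to-$1$ test to detect when a ``counter'' hits zero. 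Since the pausing capability removes the real-time restriction, this simulation is straightforward.

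The main obstacle is a presentational one rather than a mathematical one: making sure the 1DFAM simulation of the counter-based recognizer for $\mathtt{UGAUSS}$ is spelled out carefully enough, in particular that the head-pausing is used to interleave the ``decrement first counter / increment second counter'' phases within the processing of a single input symbol, and that the equality test is invoked at exactly the right moments to catch the zero-crossings. One could shortcut this by citing a known result that 1DFAM's (or one-way finite automata with multiplication and equality testing) can simulate deterministic multicounter automata, but since the excerpt does not state such a result, I would instead give the direct prime-encoding argument. An even simpler alternative, if one prefers to avoid the counter machinery entirely, is to directly program a 1DFAM for a nonregular unary language by repeated multiplication and division of the register together with pausing --- for instance keeping $2^i$ in the register after the $i$-th block of work --- and then invoke Fact~\ref{fact:ISK} for the impossibility on the rtDVA(1) side. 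Either way, the separation follows.
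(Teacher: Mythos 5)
Your inclusion argument is fine (and in fact more careful than the paper's, which treats the containment as immediate from the definitions; your observation that the arbitrary initial vector value must be absorbed into an extra multiplication on $\cent$ is a legitimate small point). The problem is the separation, where your plan is unworkable. You propose to exhibit a nonregular \emph{unary} language in $\mathfrak{L}(\textup{1DFAM})$, reading Fact~\ref{fact:ISK} as a restriction only on rtDVA(1)'s. But Fact~\ref{fact:ISK} is a statement about 1DFAM's themselves: over a unary alphabet, \emph{every} 1DFAM, pausing head and all, recognizes only a regular language. So $\mathtt{UGAUSS}$ (and any other nonregular unary language) is not in $\mathfrak{L}(\textup{1DFAM})$, and no unary witness can separate the two classes. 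Your concrete construction fails for an instructive reason: encoding two counters as prime-power exponents in the single register leaves you with only the test ``register $=1$'', which detects that \emph{both} counters are simultaneously zero (this is exactly the restriction in Fact~\ref{dva1-kca}); the two-counter algorithm for $\mathtt{UGAUSS}$ of Theorem~\ref{rtDkCA_unary} needs to detect when one counter alone hits zero while the other is large, and that event is invisible to the equality-to-1 test. The pausing capability does not rescue this.

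The paper separates the classes with the non-unary language $\mathtt{GEQ}^*=(\{a^mb^n\mid m\ge n\ge1\})^*$. There a 1DFAM works because at every moment the register is a pure power of $2$ simulating a \emph{single} active counter, so the equality test is exactly a zero test for that counter, and the $\downarrow$ head direction is used only to divide the register back down to $1$ between blocks. On the other side, $\mathtt{GEQ}^*\notin\mathfrak{L}(\textup{rtDVA(1)})$ follows from Theorem~\ref{VA-kCA} together with the fact that no rtD$k$CA recognizes $\mathtt{GEQ}^*$ (a real-time counter machine cannot reset a counter to zero in one step). If you want to keep your own write-up, you must replace the unary witness with a language of this kind; as it stands the second half of your argument proves something false.
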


\begin{proof}
  We construct a 1DFAM $\mathcal{M}$ recognizing the language $\mathtt{GEQ}^*$ that we saw in the proof of
Theorem \ref{theorem:lgrstar}. $\mathcal{M}$ uses its register to simulate the counter of a one-way
single-counter automaton. When it reads an $a$, $\mathcal{M}$ multiplies the register by 2. When reading a new
$b$, $\mathcal{M}$ rejects if the register has value 1, and multiplies with  $\frac{1}{2}$ otherwise. When a
new block of $a$ is seen to start, $\mathcal{M}$ pauses its input head while repeatedly multiplying the
register with $\frac{1}{2}$ to set its value back to 1 before processing the new block. $\mathcal{M}$ accepts
if it has processed the whole input without rejecting.

By the already mentioned fact that no rtD\textit{k}CA for any $k$ can recognize $\mathtt{GEQ}^*$, and
Theorem
\ref{VA-kCA}, we conclude
that $\mathtt{GEQ}^*\notin \mathfrak{L}\textup{(rtDVA(1))}$.
\qed\end{proof}

The same reasoning also allows us to state
\begin{corollary}
 $\mathfrak{L}(\textup{rtDVA(1)}) \subsetneq \mathfrak{L}$\textup{(rtDVA(2))}.
\end{corollary}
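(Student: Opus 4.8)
The plan is to establish the two halves of the strict inclusion separately. For the inclusion $\mathfrak{L}(\textup{rtDVA(1)}) \subseteq \mathfrak{L}(\textup{rtDVA(2)})$, I would show that every rtDVA(1) can be simulated by a rtDVA(2) via a trivial coding. Given a rtDVA(1) $\mathcal{V}=(Q,\Sigma,\delta,q_0,Q_a,v)$ with $v=[v_1]$, each of whose transitions multiplies the single-entry vector by some $1\times 1$ matrix $[m]$, I construct a rtDVA(2) $\mathcal{V}'$ with the same state set, initial vector $[v_1,1]$, and every matrix $[m]$ replaced by the $2\times 2$ block-diagonal matrix $\mathrm{diag}(m,1)$. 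The second coordinate of $\mathcal{V}'$'s vector remains equal to $1$ throughout the computation, and never influences the first coordinate, so at each step $\mathcal{V}'$ performs exactly the same equality-to-$1$ test on its first entry as $\mathcal{V}$, and it accepts precisely the same inputs.

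For strictness, I would reuse the witness language $\mathtt{GEQ}^*$ from the proof of Theorem \ref{theorem:lgrstar}, which already gives $\mathtt{GEQ}^* \in \mathfrak{L}(\textup{rtDVA(2)})$. On the other side, the proof of the immediately preceding theorem has already shown $\mathtt{GEQ}^* \notin \mathfrak{L}(\textup{rtDVA(1)})$: since no rtD\textit{k}CA recognizes $\mathtt{GEQ}^*$ for any $k$, Theorem \ref{VA-kCA}, which states $\mathfrak{L}(\textup{rtDVA(1)}) \subsetneq \bigcup_k \mathfrak{L}(\textup{rtD\textit{k}CA})$, forces $\mathtt{GEQ}^*$ out of $\mathfrak{L}(\textup{rtDVA(1)})$. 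Together these two facts make the inclusion proper.

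I do not expect any genuine obstacle, since both ingredients are already available. The only point deserving a line of care is the embedding construction: one must check that padding the vector with a constant $1$-entry and taking block-diagonal matrices really does yield a legal rtDVA(2) that mirrors the given rtDVA(1) step for step --- in particular that the mandatory equality test, which by definition inspects the first entry, is unaffected by the added coordinate.
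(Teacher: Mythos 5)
Your proposal is correct and follows essentially the same route as the paper: the corollary is stated there as an immediate consequence of the preceding results, namely that $\mathtt{GEQ}^*$ is recognized by the rtDVA(2) of Theorem~\ref{theorem:lgrstar} but lies outside $\mathfrak{L}(\textup{rtDVA(1)})$ because no rtD\textit{k}CA recognizes it and Theorem~\ref{VA-kCA} embeds $\mathfrak{L}(\textup{rtDVA(1)})$ in $\bigcup_k \mathfrak{L}(\textup{rtD\textit{k}CA})$. Your explicit padding argument for the inclusion $\mathfrak{L}(\textup{rtDVA(1)})\subseteq\mathfrak{L}(\textup{rtDVA(2)})$ is a correct spelling-out of a step the paper leaves implicit.
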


Note that Fact \ref{fact:ISK} and Theorem \ref{theorem:dva2} let one conclude that \textup{rtDVA(2)}'s
outperform \textup{rtDVA(1)}'s when the input alphabet is unary.

It is easy to state the following simultaneous Turing machine time-space upper bound on the power of
deterministic real-time vector automata:
\begin{theorem}\label{theorem:rtDVAkinP}
$\bigcup_k \mathfrak{L}\textup{(rtDVA(\textit{k}))}\subseteq \mathsf{TISP}(n^3,n)$.
\end{theorem}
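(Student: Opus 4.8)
The plan is to simulate a rtDVA($k$) on a Turing machine that keeps the current vector written on its tape, and bound the time and space used over an input of length $n$. The machine $\mathcal{V}$ performs exactly $n+2$ matrix-vector multiplications (one per tape symbol, including the two endmarkers), each by a matrix drawn from a fixed finite set $S$ of $k\times k$ rational matrices hard-coded into $\mathcal{V}$. The key observation is that the entries of the vector cannot grow too fast: if every matrix in $S$ has entries that are rationals whose numerators and denominators are bounded in absolute value by some constant $B$ (depending only on $\mathcal{V}$), then after $t$ multiplications each entry of the vector is a rational number whose numerator and denominator have absolute value at most $(kB^2)^{O(t)}$, hence bit-length $O(t)$. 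So after processing the whole input the vector occupies $O(n)$ bits, which gives the space bound.

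**Next I would** account for the time. At step $t$ the Turing machine reads the next input symbol, consults the (finite) transition table to pick the matrix $M$ and new state --- this part is $O(1)$ amortized, or at worst depends only on the size of the fixed table --- and then recomputes the vector $v \mapsto vM$. Each new entry of $vM$ is a sum of $k$ products of rationals; the rationals involved have bit-length $O(t)$, so each product (of two $O(t)$-bit rationals) costs at most $O(t^2)$ with schoolbook arithmetic, and summing $k$ of them and reducing the fraction is likewise $O(t^2)$ (we need not even reduce to lowest terms, since we only test equality with $1$, which one can do by cross-multiplication; but keeping the numbers unreduced only changes constants in the bit-length bound). Computing all $k$ entries costs $O(k\cdot t^2) = O(t^2)$. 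The equality test ``is the first entry equal to $1$?'' also costs $O(t^2)$. Summing over $t = 1,\dots,n+2$ gives total time $\sum_{t} O(t^2) = O(n^3)$.

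**Finally**, at the end of the computation the Turing machine checks whether $\mathcal{V}$ is in an accept state (trivial) and whether the first entry of the vector equals $1$ (another $O(n^2)$ check), and accepts accordingly; this correctly decides membership in $L$ by the definition of rtDVA($k$). Since $k$ is fixed for each machine, a single language in $\bigcup_k \mathfrak{L}(\text{rtDVA}(k))$ lies in $\mathsf{TISP}(n^3,n)$, and as $\mathsf{TISP}(n^3,n)$ is closed under taking this union over $k$ (the union is just a union of language classes, each contained in $\mathsf{TISP}(n^3,n)$), we get $\bigcup_k \mathfrak{L}(\text{rtDVA}(k)) \subseteq \mathsf{TISP}(n^3,n)$.

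**The main obstacle** is making the bit-length bound precise and airtight: one must verify that multiplying an $O(t)$-bit rational vector by a fixed-entry matrix really produces an $O(t+1)$-bit vector (the common denominator can be taken to be the product of at most $k$ old denominators times the matrix denominators, so its bit-length grows additively by a constant per step, and the numerators similarly), rather than something that could blow up super-linearly. Once that linear growth of bit-length per step is established, both the $n^3$ time bound (from $\sum_t t^2$) and the linear space bound follow routinely; care is only needed to confirm that we never need to store more than the current vector and $O(1)$ state information, so the space stays $O(n)$ even though the running time is cubic.
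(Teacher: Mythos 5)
Your proof is correct and follows essentially the same route as the paper's: the key point in both is that each multiplication by a fixed rational matrix increases the bit-length of the vector entries by only a constant, so the vector stays within $O(n)$ bits, and schoolbook arithmetic on $O(t)$-bit numbers over $O(n)$ steps yields $\sum_t O(t^2)=O(n^3)$ time. Your version simply makes explicit the bookkeeping (common denominators, equality test by cross-multiplication) that the paper leaves implicit.
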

\begin{proof}
A Turing machine that multiplies the vector with the matrices corresponding to the transitions of a given
rtDVA(\textit{k}) requires only linear space, since the numbers in the vector can grow by at most a fixed
number of bits for each one of the $O(n)$ multiplications in the process. Using the primary-school algorithm
for multiplication, this takes $O(n^3)$ overall time.\qed
\end{proof}
If one gave the capability of one-way traversal of the input tape to vector automata of dimension larger than
$1$, one would gain a huge amount of computational power. Even with vectors of dimension 2, such machines can
simulate one-way 2-counter automata, and are therefore Turing equivalent \cite{ISK76}.  This is why we focus
on real-time vector
automata.

\section{Blind vector automata}

A \textit{real-time deterministic blind vector automaton} (rtDBVA(\textit{k})) is a
rtDVA(\textit{k}) which is not allowed to check the
entries of the vector until the end of the computation. Formally, a
rtDBVA(\textit{k}) is a 6-tuple
\[\mathcal{V} =(Q,\Sigma,\delta,q_0,Q_a,v),\]
 where the transition function $\delta$ is defined as
$ \delta: Q \times \Sigma \rightarrow Q\times S, $
with $S$ as defined earlier.
$\delta(q,\sigma)=(q',M)$ means that when $\mathcal{V}$ reads symbol $\sigma \in \Sigma$
in state $q$, it will move to state
$q'$, multiplying the vector  with the matrix $M \in S$. The acceptance condition is the same as for
rtDVA(k)'s.

\begin{remark}\label{remark:BDVA1}
Let us start by noting that $\mathfrak{L}(\textup{rtDBVA(1))} = \bigcup_k
\mathfrak{L}(\textup{rtD\textit{k}BCA})$, unlike the general case considered in Theorem \ref{VA-kCA}:  Since
blind counter automata only check if all counters are zero at the end, the reasoning of Fact \ref{dva1-kca} is
sufficient to conclude this.\end{remark}

\begin{theorem}\label{theorem:1WAYEQUALSRT}
$ \mathfrak{L}(\textup{rtDBVA(1)}) = \mathfrak{L}(\textup{1DFAM}W)$.
\end{theorem}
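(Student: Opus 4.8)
The plan is to establish the two inclusions separately; the work is almost entirely in the direction $\mathfrak{L}(\textup{1DFAM}W)\subseteq\mathfrak{L}(\textup{rtDBVA(1)})$.

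For $\mathfrak{L}(\textup{rtDBVA(1)})\subseteq\mathfrak{L}(\textup{1DFAM}W)$, the point is that a rtDBVA(1) is essentially a 1DFAMW that never pauses, except that its one-entry ``vector'' may start at an arbitrary rational $v$ rather than at $1$. Given a rtDBVA(1) $\mathcal{V}=(Q,\Sigma,\delta,q_0,Q_a,v)$, I would build a 1DFAMW $\mathcal{M}$ with the same state set and accept states, taking as its multiplier set $\Gamma$ the (finitely many) rational entries of $\mathcal{V}$'s $1\times 1$ matrices together with the products $v\cdot m$. The machine $\mathcal{M}$ copies each move of $\mathcal{V}$ verbatim, always choosing head direction $\rightarrow$; on its very first move, reading $\cent$, it additionally multiplies by $v$ so as to install the initial value. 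Since $\mathcal{M}$ and $\mathcal{V}$ then read the tape $\cent w\dollar$ in lockstep with $\mathcal{M}$'s register always equal to $\mathcal{V}$'s vector entry, and both accept exactly when they finish in an accept state with that value equal to $1$, we get $L(\mathcal{M})=L(\mathcal{V})$.

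For $\mathfrak{L}(\textup{1DFAM}W)\subseteq\mathfrak{L}(\textup{rtDBVA(1)})$, the obstacle is that a 1DFAMW may keep its head stationary ($\downarrow$) for several steps, while a rtDBVA(1) must consume one symbol per step. The key observation is that, because a 1DFAMW cannot inspect its register, its behaviour while the head is parked on a fixed tape cell depends only on the current state and the scanned symbol. Hence from state $q$ reading $\sigma$ the machine follows a unique path of $\downarrow$-moves, and this path either reaches, within at most $|Q|$ steps, a state whose transition on $\sigma$ is $\rightarrow$ --- yielding a well-defined ``collapsed'' next state $\nu(q,\sigma)$, accumulated rational multiplier $\mu(q,\sigma)$, and advance of the head --- or it enters a cycle, so the computation runs forever and rejects. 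I would build a rtDBVA(1) $\mathcal{V}$ whose states are those of $\mathcal{M}$ plus a non-accepting sink $d$, with $\delta_{\mathcal{V}}(q,\sigma)=(\nu(q,\sigma),\mu(q,\sigma))$ when the path terminates and $\delta_{\mathcal{V}}(q,\sigma)=(d,1)$ when it loops (this covers $\cent$ and the ordinary symbols), initial vector $1$, and a slightly special treatment of $\dollar$: after scanning $\dollar$ the 1DFAMW does one more bounded stationary computation and halts, so I set $\delta_{\mathcal{V}}(q,\dollar)$ to the state $q_f$ in which $\mathcal{M}$ halts (or to $d$ if it loops) with the corresponding accumulated multiplier, declaring $q_f$ accepting in $\mathcal{V}$ exactly when it is accepting in $\mathcal{M}$. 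A routine induction on the number of input symbols consumed shows that after $\mathcal{V}$ reads a prefix, its state and vector entry equal the state and register value of $\mathcal{M}$ when its head first reaches the next cell; pushing this through $\dollar$ gives $L(\mathcal{V})=L(\mathcal{M})$.

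The place that needs care, and which I expect to be the main obstacle, is checking that the collapsed transitions faithfully reproduce non-acceptance as well as acceptance: a 1DFAMW computation that diverges --- in the middle of the input or on the right endmarker, whether or not its loop has net multiplier $1$ --- must map to a rejecting run of $\mathcal{V}$, which holds precisely because $\mathcal{V}$ then enters the sink $d$ and, reading the rest of the input in real time, finishes in the non-accepting state $d$. The boundedness claim itself is immediate: along a $\downarrow$-path before the first $\rightarrow$-move no state can repeat, since determinism forbids one state from having both a $\downarrow$-transition and, later, a $\rightarrow$-transition on the same symbol; hence the path has length at most $|Q|$ and $\nu,\mu$ are well-defined.
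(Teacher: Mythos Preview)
Your proposal is correct and follows essentially the same approach as the paper: the forward inclusion is immediate (you are more careful than the paper in noting that the arbitrary initial value $v$ must be folded into the $\cent$-transition), and for the reverse inclusion you, like the paper, precompute for each state/symbol pair the net effect of the deterministic stationary run, collapsing it into a single real-time move with the accumulated multiplier and routing looping pairs to a rejecting sink. The additional details you supply (the explicit sink state, the $|Q|$ bound on the stationary path, and the separate handling of $\dollar$) are all sound refinements of the paper's sketch.
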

\begin{proof}
A rtDBVA(1) is clearly a 1DFAMW, so we look at the other direction of the equality. Given a 1DFAMW
$\mathcal{V}_1$, we wish to construct a rtDBVA(1) $\mathcal{V}_r$ which mimics $\mathcal{V}_1$, but without
spending more than one computational step on any symbol. When $\mathcal{V}_1$ scans a particular input symbol
$\sigma$ for the first time in a particular state $q$, whether it will ever leave this symbol, and if so,
after which sequence of moves, are determined by its program. This information can be precomputed for every
state/symbol pair by examining the transition function of $\mathcal{V}_1$. We program $\mathcal{V}_r$ so that
it rejects the input if it ever determines during computation that $\mathcal{V}_1$ would have entered an
infinite loop. Otherwise, upon seeing the simulated $\mathcal{V}_1$ moving on a symbol $\sigma$ while in state
$q$, $\mathcal{V}_r$ simply retrieves the aforementioned information from a lookup table, moves the head to the
right, entering the state that $\mathcal{V}_1$ would
enter when it moves off that $\sigma$, and multiplies its single-entry vector with the product of the
multipliers corresponding to the transitions $\mathcal{V}_1$ executes while the head is pausing  on $\sigma$.
\qed\end{proof}

We now give a full characterization of the class of languages
recognized by real-time deterministic blind vector automata.

\begin{theorem}\label{s=}
$\bigcup_k \mathfrak{L}(\textup{rtDBVA(\textit{k})})=\mathsf{S}^=_{\mathbb{Q}}$.
\end{theorem}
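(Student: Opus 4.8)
The plan is to prove the two inclusions $\bigcup_k \mathfrak{L}(\textup{rtDBVA}(k))\subseteq\mathsf{S}^=_{\mathbb{Q}}$ and $\mathsf{S}^=_{\mathbb{Q}}\subseteq\bigcup_k \mathfrak{L}(\textup{rtDBVA}(k))$ separately.

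For the forward inclusion, suppose $L$ is recognized by a rtDBVA($k$) $\mathcal{V}=(Q,\Sigma,\delta,q_0,Q_a,v)$. I would build a Turakainen finite automaton (TuFA) $\mathcal{G}'$ by letting the states of $\mathcal{G}'$ be the pairs $Q\times\{1,\dots,k\}$, so a configuration $(q,u)$ of $\mathcal{V}$ with $u\in\mathbb{Q}^k$ is encoded by the vector that places $u_j$ in the coordinate indexed by $(q,j)$. Each transition matrix $M$ of $\mathcal{V}$ on symbol $\sigma$, together with the deterministic state change $q\mapsto q'$, lifts naturally to a $|Q|k\times|Q|k$ rational matrix $A_\sigma$ acting correctly on these encoded vectors (the block indexed by $(q,\cdot)\times(q',\cdot)$ is $M$, all other blocks out of $(q,\cdot)$ are zero); determinism is what makes this a well-defined single matrix per symbol. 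The initial row vector $v_0$ places $v$ in the block for $q_0$ and zeros elsewhere, and the final column vector $f$ has a $1$ in each coordinate $(q,1)$ with $q\in Q_a$ and $0$ elsewhere, so that $f_{\mathcal{G}'}(w)$ equals the first entry of $\mathcal{V}$'s vector after reading $w$ if $\mathcal{V}$ ends in an accept state, and some other value otherwise — this last point needs a little care, since we must ensure the value is exactly $1$ \emph{only} in the accepting case. A clean way around that concern is to precompose/postcompose with a standard Turakainen normalization (as in \cite{Tu69,Yak12}) so that the ``reject'' value is pushed away from $1$; then $L=(\mathcal{G}',{=}1)\in\mathsf{S}^=_{\mathbb{Q}}$.

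For the reverse inclusion, suppose $L=(\mathcal{G}',{=}\lambda)$ for a TuFA $\mathcal{G}'=(Q,\Sigma,\{A_\sigma\},v_0,f)$ with rational entries and $\lambda\in\mathbb{Q}$. The idea is to have a rtDBVA($k$) with $k=|Q|+2$ carry the row vector $v_0A_{w_1}\cdots A_{w_i}$ in $|Q|$ of its coordinates, keep a constant $1$ in one extra coordinate (to allow affine updates), and use the final coordinate — reserved as the ``first entry'' to be tested — only on the endmarker. Reading $\sigma$, the machine multiplies by the block matrix that applies $A_\sigma$ to the $|Q|$ working coordinates and leaves the constant coordinate alone; this is an ordinary rational matrix multiplication, so it is a legal rtDBVA move, and it requires no state changes. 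On reading $\dollar$, the machine multiplies by a final matrix that computes $\bigl(v\cdot f\bigr)-\lambda+1$ and writes it into the tested coordinate, using the stored vector $v$, the vector $f$ (constant coefficients), and the constant-$1$ coordinate to absorb the $-\lambda+1$ additive term. It enters an accept state (the single state suffices throughout), and by the rtDBVA acceptance condition it accepts iff that coordinate equals $1$, i.e. iff $f_{\mathcal{G}'}(w)=\lambda$, i.e. iff $w\in L$.

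I expect the main obstacle to be the forward direction: a rtDBVA($k$) can track internal state \emph{and} a vector, and it accepts iff (state accepting) \emph{and} (first vector entry $=1$), whereas a TuFA has only the linear quantity $f_{\mathcal{G}'}(w)$ and the threshold $\lambda$. Encoding the finite control into extra coordinates is routine; the delicate part is making sure the single real number $f_{\mathcal{G}'}(w)$ faithfully distinguishes acceptance from non-acceptance — in particular that a non-accepting computation cannot accidentally produce the value $1$. Handling this cleanly is where I would invoke Turakainen's classical technique for shifting and scaling GFA acceptance values so that the images of ``accept'' and ``reject'' strings are separated, which also takes care of the side condition ``$v_0$ and $f$ must encode state information without corrupting the first-entry test.'' Everything else is bookkeeping with block matrices.
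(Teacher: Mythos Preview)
Your approach is essentially the paper's: both directions use the same block-matrix encoding of the finite control into extra vector coordinates. For the reverse inclusion the paper streamlines your construction by invoking Turakainen's result that one may take $\lambda=1$ without loss of generality, so no constant-$1$ coordinate or affine shift is needed; the rtDBVA simply has $k=|Q|$, a single state, multiplies by $A_\sigma$ on each symbol, and on $\dollar$ multiplies by a matrix whose first column is $f$.

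The one place you diverge is the worry you flag in the forward direction, and there you are making life harder than necessary. In your own block construction, determinism of $\mathcal{V}$ guarantees that after reading any prefix the encoded row vector is supported on \emph{exactly one} $k$-block, namely the one indexed by $\mathcal{V}$'s current state. Consequently the inner product with your $f$ is $0$ whenever $\mathcal{V}$ ends in a non-accepting state (all entries of $f$ in that block are $0$), and equals precisely the first entry of $\mathcal{V}$'s vector when $\mathcal{V}$ ends in an accepting state. Thus $f_{\mathcal{G}'}(w)=1$ iff $\mathcal{V}$ is in an accept state with first entry $1$, i.e.\ iff $\mathcal{V}$ accepts. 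There is no ``accidental $1$'' to push away and no normalization trick is required; the paper simply writes $L=(\mathcal{G},{=}1)$ at this point. Your appeal to an unspecified ``Turakainen normalization'' is the only soft spot in the proposal --- replace it with the one-line observation that the non-accepting value is $0$, and the argument is complete.
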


\begin{proof}
For any  language $L\in\mathsf{S}^=_{\mathbb{Q}}$, we can assume without loss of generality that $L=(\mathcal{G},\mbox{=}1)
$ \cite{Tu69} for some TuFA $\mathcal{G}$ with, say, $m$ states. Let us construct a
rtDBVA(\textit{k}) $\mathcal{V}$ simulating $\mathcal{G}$. We let $k=m$, so that the
vector is in $\mathbb{Q}^k$. The initial vector values of $\mathcal{V}$ and $\mathcal{G}$ are identical. $\mathcal{V}$ has only one state, and the vector is
multiplied with the corresponding transition matrix of $\mathcal{G}$ when an input
symbol is read. When processing the right endmarker, $\mathcal{V}$ multiplies the vector with a
matrix whose first column is the final vector $f$ of
$\mathcal{G}$.  $\mathcal{V}$ accepts input string $w$ if the first entry of
the vector is 1 at the end of the computation, which happens only if the acceptance value $f_{\mathcal{G}}(w)=1$.

For the other direction, let us simulate a rtDBVA(\textit{k}) $\mathcal{V}$ recognizing some language $L$ by a
TuFA $\mathcal{G}$. If $\mathcal{V}$
has $m$ states, then $\mathcal{G}$ will have  $km$ states. For any symbol $a$, the corresponding transition matrix $A$ is constructed as follows. View $A$ as being tiled to $m^2$ $k \times k$ submatrices
called $A_{i,j}$, for $i, j\in\{1,2,...,m\}$. If $\mathcal{V}$  moves from $q_i$ to $q_j$ by multiplying the vector
with the matrix $M_{i}$ when
reading symbol $a$, then  $A_{i,j}$ will be set to equal
$M_{i}$.  All remaining entries of $A$ are zeros. The
initial vector $v'$ of $\mathcal{G}$ will be a row vector with $km$ entries, viewed as being segmented to $m$ blocks of $k$ entries.  The first $k$
entries of $v'$, corresponding to the initial state of $\mathcal{V}$, will equal $v$, and the remaining entries of $v'$ will equal 0. The
$km$ entries of the final column vector $f$ of $\mathcal{G}$ will again consist of $m$ segments
corresponding  to the states of $\mathcal{V}$. The first entry of every such segment that corresponds to an
accept state of $\mathcal{V}$ will equal 1,  and all remaining entries will equal 0. $\mathcal{G}$ imitates the computation of $\mathcal{V}$ by keeping the current value of the vector of $\mathcal{V}$ at any  step within the segment that corresponds to $\mathcal{V}$'s current state in the vector representing the portion of $\mathcal{G}$'s own matrix multiplication up to that point. We therefore have that $L=(\mathcal{G},\mbox{=}1)$.
\qed\end{proof}

We can also give a characterization for the case where the alphabet is unary, thanks to the following fact,
which is implicit in the proof of Theorem 7 in \cite{Di77}:

\begin{fact}\label{fact:unaryreg}
  All languages on a unary alphabet in $\mathsf{S}^=_{\mathbb{Q}}$ are regular.
\end{fact}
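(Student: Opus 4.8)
The plan is to reduce the statement to the Skolem--Mahler--Lech theorem on zero sets of linear recurrence sequences. Suppose $L\subseteq\{a\}^*$ satisfies $L=(\mathcal{G},\mbox{=}\lambda)$ for a TuFA $\mathcal{G}=(Q,\{a\},\{A\},v_0,f)$ and some $\lambda\in\mathbb{Q}$, where $A$ is the single rational transition matrix. By the definition of $f_{\mathcal{G}}$, we have $a^n\in L$ if and only if $s_n:=v_0A^nf=\lambda$, so it suffices to prove that the set $Z=\{\,n\in\mathbb{N}\mid s_n=\lambda\,\}$ is ultimately periodic; the length set of $L$ is then ultimately periodic, and $L$ is regular by the standard characterization of unary regular languages.

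First I would note that $(s_n)_{n\geq0}$ is a linear recurrence sequence over $\mathbb{Q}$: by the Cayley--Hamilton theorem $A$ satisfies its characteristic polynomial $\chi(x)=x^m-c_{m-1}x^{m-1}-\cdots-c_0$ with all $c_i\in\mathbb{Q}$, so $A^{n+m}=c_{m-1}A^{n+m-1}+\cdots+c_0A^n$ for every $n$, and left-multiplying by $v_0$ and right-multiplying by $f$ gives $s_{n+m}=c_{m-1}s_{n+m-1}+\cdots+c_0s_n$. Next I would set $u_n=s_n-\lambda$. Since the constant sequence equal to $\lambda$ is itself a linear recurrence sequence (it satisfies $x_{n+1}=x_n$) and the class of linear recurrence sequences over a field is closed under termwise addition, $(u_n)_{n\geq0}$ is again a linear recurrence sequence over $\mathbb{Q}$, and $Z=\{\,n\mid u_n=0\,\}$.

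Now I would invoke the Skolem--Mahler--Lech theorem: the zero set of a linear recurrence sequence over a field of characteristic $0$ is the union of a finite set with finitely many (one-sided) arithmetic progressions. Letting $P$ be the least common multiple of the common differences of those progressions, for all sufficiently large $n$ whether $n\in Z$ depends only on $n\bmod P$; hence $Z$ is ultimately periodic, which finishes the argument. This is essentially the route implicit in the proof of Theorem~7 of \cite{Di77}.

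The main obstacle---and the only nonelementary ingredient---is the Skolem--Mahler--Lech theorem itself, whose known proofs go through $p$-adic analysis; the reduction steps above are routine. I do not see how to dispense with it: the set $Z$ can genuinely be infinite without being cofinite (for instance $s_n=\lambda$ for exactly the even $n$), so there is no ``finitely many solutions'' shortcut, and one really needs the arithmetic-progression structure that Skolem--Mahler--Lech supplies.
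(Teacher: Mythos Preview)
Your argument is correct. The paper itself does not supply a proof of this fact; it only remarks that the statement is implicit in the proof of Theorem~7 of \cite{Di77}, and your Skolem--Mahler--Lech reduction is exactly the standard route behind that result (as you note in your final line). There is nothing to compare against beyond the citation, and your write-up fills in the omitted details accurately.
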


We can say the following about the effect of increasing $k$ on the power of rtDBVA(\textit{k})'s:

\begin{theorem}
 $\mathfrak{L}$\textup{(rtDBVA(1))} $\subsetneq \mathfrak{L}$\textup{(rtDBVA(2))}.
\end{theorem}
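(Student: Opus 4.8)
The inclusion $\mathfrak{L}(\textup{rtDBVA(1)}) \subseteq \mathfrak{L}(\textup{rtDBVA(2)})$ is immediate: given a rtDBVA(1), simulate it by a rtDBVA(2) that carries the old single entry in its first coordinate and a permanent $1$ in its second coordinate, replacing every scalar multiplier $m$ by the matrix $E^1_2(m)$, which multiplies only the first coordinate by $m$ and fixes the second. So the plan is to establish strictness by exhibiting a witness language in $\mathfrak{L}(\textup{rtDBVA(2)}) \setminus \mathfrak{L}(\textup{rtDBVA(1)})$; I would take $\mathtt{POW} = \{a^n b^{2^n} \mid n \geq 0\}$.

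First I would show $\mathtt{POW} \in \mathfrak{L}(\textup{rtDBVA(2)})$. Start the vector at $[1,1]$; its second coordinate always stays $1$. On each $a$ multiply by $E^1_2(2)$, so that after $a^n$ the vector is $[2^n, 1]$. On each $b$ multiply by $E^2_2(-1)$, which subtracts $1$ from the first coordinate; hence after $a^n b^m$ the first coordinate is $2^n - m$. On the right endmarker multiply by $E^2_2(1)$, adding $1$ back, so the first coordinate becomes $2^n - m + 1$, which equals $1$ exactly when $m = 2^n$. The finite-state control enters a permanent non-accept state on any input not of the form $a^* b^*$ and is accepting otherwise; since acceptance additionally requires the first coordinate to be $1$, the machine accepts precisely $\mathtt{POW}$ (the empty input and the inputs $a^n$ with $n \geq 1$ yield first coordinate $\geq 2$ and are correctly rejected, as are $a^n b^m$ with $m \neq 2^n$).

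Next I would show $\mathtt{POW} \notin \mathfrak{L}(\textup{rtDBVA(1)})$. By Remark~\ref{remark:BDVA1} it suffices to show that no rtD\textit{k}BCA recognizes $\mathtt{POW}$. Suppose $\mathcal{M}$ is such a machine. On an all-$a$ prefix its state sequence is ultimately periodic with some period $p$, and within each class of lengths modulo $p$ each counter is an affine function of the length; likewise, started from any fixed state and fed a block of $b$'s, its state behaviour is ultimately periodic with some period $p'$, and each counter is, after enough $b$'s, an affine function of the number of $b$'s. Hence there are moduli $p, p'$ so that for each residue pair $(r,r')$ there is a fixed affine system such that, for all large $n \equiv r \pmod p$ and large $m \equiv r' \pmod{p'}$, $\mathcal{M}$ accepts $a^n b^m$ iff $\vec{\alpha}\, n + \vec{\beta}\, m + \vec{\gamma} = \vec 0$ (and the final state, determined by $(r,r')$, is accepting). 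Since $2^n \bmod p'$ is itself eventually periodic in $n$, restrict to an infinite arithmetic progression of $n$ on which $n$ lies in a fixed class mod $p$ and $2^n$ lies in a fixed class mod $p'$; on it $\mathcal{M}$ must accept $a^n b^{2^n}$, so $\vec{\alpha}\, n + \vec{\beta}\, 2^n + \vec{\gamma} = \vec 0$ for infinitely many $n$. If $\vec{\beta} \neq \vec 0$ this makes $2^n$ an affine function of $n$ along the progression, impossible; if $\vec{\beta} = \vec 0$ then $\vec{\alpha} = \vec 0$ and $\vec{\gamma} = \vec 0$, so $\mathcal{M}$ also accepts $a^n b^{2^n + p'} \notin \mathtt{POW}$ --- a contradiction in either case. (Equivalently, one may invoke the known semilinearity of the Parikh images of blind multicounter languages \cite{Gr78} and note that $\{(n, 2^n)\}$ is not semilinear.)

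The main obstacle is this lower bound: one must argue rigorously that a real-time deterministic blind multicounter machine cannot track an exponentially growing quantity. The delicate point is juggling the two ultimately periodic behaviours (on the $a$-block and on the $b$-block) together with the residue of $2^n$ modulo the $b$-block period, so as to land on an infinite set of inputs $a^n b^{2^n}$ governed by a single affine acceptance condition; both degenerate cases --- a nontrivial affine equation forcing ``exponential $=$ linear'', and the all-zero equation forcing over-acceptance --- then close the argument. The inclusion and the rtDBVA(2) construction are routine.
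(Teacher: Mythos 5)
Your proof is correct, but it reaches the separation by a genuinely different route than the paper. The paper's witness is the marked palindrome language $\mathtt{MPAL}=\{wcw^r\}$: it builds a rtDBVA(2) that encodes $w$ into one vector entry and decodes it after the marker, and for the lower bound it invokes the chain rtDBVA(1) $=\bigcup_k$rtD$k$BCA $\subseteq \mathsf{2PFA}$ together with the Dwork--Stockmeyer theorem that $\mathtt{PAL}\notin\mathsf{2PFA}$. You instead use $\{a^nb^{2^n}\}$ (note this clashes with the paper's later use of the name $\mathtt{POW}$ for the unary language $\{a^{k+2^k}\}$, so you should rename it), with a clean rtDBVA(2) program whose correctness I checked --- the endmarker trick of adding the permanent $1$ back so that the test for equality to $1$ does the work of a test for $0$ is exactly in the spirit of the paper's Theorem~1(ii) --- and for the lower bound you reduce via Remark~\ref{remark:BDVA1} to rtD$k$BCA's and give a self-contained periodicity argument: since the machine is blind, its state evolution on $a^nb^m$ depends only on the input, so on each residue class the final counter vector is affine in $(n,m)$, and the two degenerate cases (a nontrivial equation forcing $2^n$ to be affine in $n$, or the trivial equation forcing acceptance of $a^nb^{2^n+p'}$) both yield contradictions; this argument is sound, and the parenthetical appeal to semilinearity of blind-multicounter Parikh images is a legitimate shortcut. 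The trade-off: your approach is elementary and avoids importing the nontrivial probabilistic-automata machinery ($\mathsf{2PFA}$ and $\mathtt{PAL}$), at the cost of carrying out the ultimately-periodic bookkeeping yourself; the paper's approach is shorter on the page but leans on two external results, and its witness $\mathtt{MPAL}$ simultaneously shows rtDBVA(2)'s can do something no bounded-error two-way probabilistic finite automaton can, which is a stronger by-product than your separation from blind counter machines alone.
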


\begin{proof}

Let us construct a rtDBVA(2) $\mathcal{V}$ recognizing the marked palindrome language
$\mathtt{MPAL}=\{wcw^r|w \in \{a,b\}^*\}$, where $w^r$ stands for the reverse of string $w$. We let
the initial vector equal $[0,1]$. While reading the input string, $\mathcal{V}$ first encodes the string $w$
in the first entry of the vector using the matrices $M_{a_1}$ and $M_{b_1}$.
$$M_{a_1} =
\left [
\begin{array} {rr}
10&~~0\\
1&1
\end{array}
\right ]
M_{b_1}=
\left [
\begin{array} {rr}
10&~~0\\
2&1
\end{array}
\right ]
$$

Each time it reads an $a$ and a $b$, $\mathcal{V}$ multiplies the vector with $M_{a_1}$ and $M_{b_1}$,
respectively. In the encoding, each $a$ is represented by an occurrence of the digit
1, and each $b$ is represented by a 2. Upon reading the symbol $c$, $\mathcal{V}$ finishes reading $w$ and
starts reading the rest of the string. $\mathcal{V}$ now makes a reverse encoding and multiplies
the vector with $M_{a_2}$ and $M_{b_2}$ each time it reads an $a$ and a $b$, respectively.
$$M_{a_2}=
\left [
\begin{array} {rr}
\frac{1}{10}&~~0\\
-\frac{1}{10}&1\\
\end{array}
\right ]
M_{b_2}=
\left [
\begin{array}{rr}
\frac{1}{10}&~~0\\
-\frac{2}{10}&1\\
\end{array}
\right ]
$$

When the computation ends, the first entry of the vector is equal to 0 iff the string read after the symbol
$c$ is the reverse of the string $w$ so that the input string is in $\mathtt{MPAL}$.

Now, we are going to prove that $\mathtt{MPAL} \notin \mathsf{2PFA}$, that is, the class of languages
accepted by two-way probabilistic finite automata with bounded error. Suppose for a contradiction that there
exists a two-way probabilistic finite automaton (2pfa) $\mathcal{M}$ recognizing $\mathtt{MPAL}$ with
bounded error. Then it is not hard show that $\mathtt{PAL}$ can be recognized by a 2pfa $ \mathcal{M'} $ such that $ \mathcal{M'} $ sees the input, say $ w $, as $ u=wcw $ and then executes $ \mathcal{M} $ on $ u $. Note that $ \mathcal{M} $ accepts $ u $ if and only if $ w $ is a member of $\mathtt{PAL}$.
Since $\mathtt{PAL}\notin\mathsf{2PFA}$ \cite{DS92}, we get a contradiction. Hence, we conclude that $\mathtt{MPAL}$ can not be in $\mathsf{2PFA}$.

 It is known \cite{Ra92} that $\mathsf{2PFA}$ includes all languages recognized by one-way deterministic blind multicounter automata, and  we already stated that \textup{rtDBVA(1)} and \textup{rtD\textit{k}BCA} are equivalent
 models in Remark \ref{remark:BDVA1}. Since $\mathtt{MPAL}\notin\mathsf{2PFA}$, $\mathtt{MPAL}$ cannot be in $\mathfrak{L}$\textup{(rtDBVA(1))}. Having proven that $\mathtt{MPAL} \in \mathfrak{L}$\textup{(rtDBVA(2))}, we conclude that $\mathfrak{L}$\textup{(rtDBVA(1))} $\subsetneq \mathfrak{L}$\textup{(rtDBVA(2))}.
\qed
\end{proof}

For an $m$-state rtDBVA(\textit{k}) $\mathcal{V}$, we define the \textit{size} of $\mathcal{V}$ to be the product $mk$. For all $i\geq1$, let $\mathfrak{L}$\textup{(rtDBVASIZE($i$))} denote the class of languages that are recognized by real-time deterministic blind vector automata whose size is $i$. We use the following fact to prove a language hierarchy  on this metric.

\begin{fact} \label{fact:rec} \cite{Di71} (Recurrence Theorem)
Let $\mathtt{L}$ be a language belonging to $S^=_{\mathbb{Q}}$ in the alphabet $\Sigma$. Then there exists a natural
number $n \geq 1$ such that for any words $x,y,z \in \Sigma^*$, if $yz, yxz,...,yx^{n-1}z \in \mathtt{L}$,
then $yx^mz \in \mathtt{L}$ for any $m \geq 0$.
\end{fact}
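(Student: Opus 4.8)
The plan is to reduce the claim to an elementary fact about linearly recurrent sequences over $\mathbb{Q}$. By the definition of $\mathsf{S}^=_{\mathbb{Q}}$, there is a Turakainen finite automaton $\mathcal{G}=(Q,\Sigma,\{A_\sigma\},v_0,f)$, say with $|Q|=k$, and a number $\lambda\in\mathbb{Q}$ such that $\mathtt{L}=\{w\in\Sigma^*\mid f_{\mathcal{G}}(w)=\lambda\}$, where $f_{\mathcal{G}}(w)=v_0A_{w_1}\cdots A_{w_{|w|}}f$. Fix arbitrary words $x,y,z\in\Sigma^*$, and abbreviate the rational row vector $u=v_0A_{y_1}\cdots A_{y_{|y|}}$, the rational $k\times k$ matrix $B=A_{x_1}\cdots A_{x_{|x|}}$, and the rational column vector $w'=A_{z_1}\cdots A_{z_{|z|}}f$. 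Then $f_{\mathcal{G}}(yx^mz)=uB^mw'$ for every $m\geq0$; write $g(m)$ for this value and put $h(m)=g(m)-\lambda$.

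The key step is to observe that the sequence $(h(m))_{m\geq0}$ obeys a homogeneous linear recurrence whose \emph{order} depends only on $\mathcal{G}$. By the Cayley--Hamilton theorem, $B$ is annihilated by its characteristic polynomial $\chi(t)=t^{k}-a_{k-1}t^{k-1}-\cdots-a_0$, so $B^{m+k}=\sum_{i=0}^{k-1}a_iB^{m+i}$ for every $m$; multiplying this identity on the left by $u$ and on the right by $w'$ shows that $g$ satisfies the order-$k$ recurrence with characteristic polynomial $\chi$. The constant sequence $\lambda$ satisfies the order-$1$ recurrence with characteristic polynomial $t-1$, so both $g$ and this constant sequence --- hence their difference $h$ --- satisfy the recurrence whose characteristic polynomial is $\psi(t)=\chi(t)(t-1)=t^{n}-b_{n-1}t^{n-1}-\cdots-b_0$ with $n=k+1$; indeed the constant sequence satisfies it because $\psi(1)=0$. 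Although the coefficients $a_i,b_i$ vary with $x,y,z$, the order $n=k+1$ does not, since every $A_\sigma$, and therefore every product of such matrices, is $k\times k$.

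It then remains to run the recurrence forward. The hypothesis $yz,\,yxz,\dots,yx^{n-1}z\in\mathtt{L}$ says exactly that $g(0)=g(1)=\cdots=g(n-1)=\lambda$, i.e.\ $h(0)=h(1)=\cdots=h(n-1)=0$. From $h(m+n)=\sum_{i=0}^{n-1}b_ih(m+i)$ one gets, by induction on $m$, that $h(m)=0$ for all $m\geq0$, hence $f_{\mathcal{G}}(yx^mz)=\lambda$ and $yx^mz\in\mathtt{L}$ for all $m\geq0$. This is the desired conclusion, with the single value $n=k+1$ working for every choice of $x,y,z$.

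I do not anticipate a genuine obstacle: the argument amounts to Cayley--Hamilton, the standard trick of absorbing an inhomogeneous constant by multiplying the characteristic polynomial by $t-1$, and a one-line induction. The only point that needs attention is the uniformity of $n$ already noted above; and if one prefers not to invoke Cayley--Hamilton, the coarser observation that $I,B,B^2,\dots$ must eventually become linearly dependent in the $k^2$-dimensional space of rational $k\times k$ matrices yields the same statement with $n=k^2+1$.
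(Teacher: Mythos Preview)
The paper does not give its own proof of this statement; it is quoted as a \emph{Fact} from the literature with a citation to \cite{Di71}, so there is nothing to compare against. Your argument is correct and is essentially the standard one: Cayley--Hamilton yields a uniform-order linear recurrence for $m\mapsto f_{\mathcal{G}}(yx^mz)$, the constant $\lambda$ is absorbed by the extra factor $(t-1)$, and $n$ consecutive zeros then propagate by induction. One cosmetic remark: the coefficients $a_i,b_i$ depend only on $x$ (through $B$), not on $y$ or $z$ as you write, but this does not affect the reasoning or the uniformity of $n$.
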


\begin{theorem}\label{thm:hier}
For every $i>1$, $\mathfrak{L}$\textup{(rtDBVASIZE($i-1$))}$\subsetneq\mathfrak{L}$\textup{(rtDBVASIZE($i$))}.
\end{theorem}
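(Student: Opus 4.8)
The plan is to establish the strict hierarchy by exhibiting, for each size $i$, a witness language that is recognizable by a rtDBVA of size $i$ but not by any rtDBVA of size $i-1$. The natural candidates are unary languages of the form $\mathtt{MOD}_p=\{a^n \mid n \equiv 0 \pmod{p}\}$, or products/combinations of such, because the Recurrence Theorem (Fact~\ref{fact:rec}) gives a clean lower bound tool: a language in $\mathsf{S}^=_{\mathbb{Q}}$ is controlled by a single recurrence parameter $n$, and by Theorem~\ref{s=} every rtDBVA language lies in $\mathsf{S}^=_{\mathbb{Q}}$. So the first step is to pin down how the recurrence parameter $n$ of a rtDBVA-recognizable unary language is bounded in terms of the size $mk$ of the machine. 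Reading the machine's behavior on $a^*$, the sequence of vectors $v, vM, vM^2, \dots$ together with the state sequence is governed by an $mk$-dimensional linear recurrence (the block matrix construction from Theorem~\ref{s=} turns the $m$-state, $k$-dimensional machine into a single $mk\times mk$ matrix acting on $\mathbb{Q}^{mk}$). Hence the acceptance-value sequence $f_{\mathcal{G}}(a^n)$ satisfies a linear recurrence of order at most $mk$, which bounds the Recurrence-Theorem constant by a function of $mk$.

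Second, I would construct the upper-bound side: for a suitable $i$, give an explicit rtDBVA of size exactly $i$ recognizing a unary language whose Recurrence-Theorem constant is as large as that order bound permits — e.g.\ take $p$ to be a product of the first few primes (or just $p$ itself chosen appropriately) so that $\mathtt{MOD}_p$ requires recurrence order $p$, and realize $\mathtt{MOD}_p$ with a machine whose $mk$ equals the desired $i$. Counting modulo $p$ on a unary alphabet is done, say, with $k=1$ and $m$ states acting as a cyclic counter, or with a rotation matrix of the right dimension and a single state; in either case one gets size $i$ with the recurrence parameter forced to be $i$ (or the largest value compatible with $i$). The separation then reads: this language is in $\mathfrak{L}(\mathtt{rtDBVASIZE}(i))$ by construction, but any machine of size $i-1$ would, by the recurrence-order bound and Fact~\ref{fact:rec}, be forced to accept either too many or too few powers of $a$, contradicting recognition of the target language. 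One has to be slightly careful to interpolate all sizes (not just those of the form "product of primes"); padding the witness by a fixed regular factor, or adjusting $m$ versus $k$, handles the intermediate values, and the containment $\mathfrak{L}(\mathtt{rtDBVASIZE}(i-1))\subseteq\mathfrak{L}(\mathtt{rtDBVASIZE}(i))$ is immediate since a smaller machine is a special case of a larger one (add a dummy state or a dummy vector coordinate).

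The main obstacle I expect is the precise bookkeeping in the lower bound: showing that size $i-1$ really caps the Recurrence-Theorem constant below the value needed for the witness. This requires tracking that the $(i-1)$-dimensional linear system governing the machine on $a^*$ cannot produce a characteristic polynomial with a primitive $p$-th root of unity (or whatever algebraic feature $\mathtt{MOD}_p$ demands) when $p>i-1$, and then invoking Fact~\ref{fact:rec} correctly — the Recurrence Theorem is stated for membership, so I must phrase the witness as "$a^n\in L$ for $n$ in an arithmetic progression of modulus exactly $i$" and check the hypothesis $yz,yxz,\dots,yx^{n-1}z\in L$ is met with $y=z=\varepsilon$, $x=a$, forcing the wrong conclusion for a size-$(i-1)$ machine. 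Once the numerology of "size $i-1$ $\Rightarrow$ recurrence order $\leq i-1$ $\Rightarrow$ cannot separate modulus-$i$ classes" is locked down, the rest is routine.
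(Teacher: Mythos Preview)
Your overall architecture is exactly the paper's: translate size $i$ into a TuFA with $i$ states via Theorem~\ref{s=}, then use the Recurrence Theorem (Fact~\ref{fact:rec}) to lower-bound the number of TuFA states needed for a unary modular witness. The paper also takes the recurrence constant in Fact~\ref{fact:rec} to be the number of TuFA states, so your ``function of $mk$'' should in fact be sharpened to $mk$ itself; that is what makes the numerology go through without any padding or products of primes.

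There is, however, a genuine gap in your choice of witness. You take $\mathtt{MOD}_p=\{a^n\mid n\equiv 0\pmod p\}$ and propose to apply Fact~\ref{fact:rec} with $y=z=\varepsilon$, $x=a$. But then the hypothesis asks that $\varepsilon,a,a^2,\dots,a^{n-1}$ all lie in $L$, and for your $L$ only $\varepsilon$ does; the Recurrence Theorem simply does not fire. The paper avoids this by using the \emph{complement}, $\mathtt{MOD}_k=\{a^i\mid i\not\equiv 0\pmod k\}$, and the choice $y=a$, $x=a$, $z=\varepsilon$: if the TuFA has $n<k$ states then $a,a^2,\dots,a^{n}$ are all in $\mathtt{MOD}_k$ (their exponents are all below $k$), so Fact~\ref{fact:rec} forces every $a^m$ into $\mathtt{MOD}_k$, contradicting $a^k\notin\mathtt{MOD}_k$. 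With this corrected witness the argument is one line, and all of your worries about primitive roots of unity, interpolating intermediate sizes, and padding evaporate: $\mathtt{MOD}_i$ is recognized by a single-state rtDBVA($i$) (hence size $i$) via Theorem~\ref{s=}, and any size-$(i-1)$ machine would yield an $(i-1)$-state TuFA, which the Recurrence Theorem rules out.
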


\begin{proof}
We first establish a hierarchy of complexity classes for TuFA's based on the number of states, and use this
fact to conclude the result.

It is obvious that the language $\mathtt{MOD_k}=\{a^{i} \mid i \neq 0 \mod k\}\in S^=_{\mathbb{Q}}$. We claim that any
TuFA $\mathcal{G}$ recognizing $\mathtt{MOD_k}$ should have at least $k$ states. Let $n$ be the number of states
of $\mathcal{G}$ and let us suppose that $n<k$. We are going to use Fact \ref{fact:rec} as follows: Let
$x=a$, $y=a$ and let $z$ be the empty string. Since the strings $a,a^2,...,a^{n-1}$ are in $\mathtt{MOD_k}$,
we see that the strings of the form $a^+$ are also in $\mathtt{MOD_k}$ and we get a
contradiction. Hence, we conclude that $n \geq k$ should hold, and that $\mathcal{G}$ should have at least $k$
states.

By Theorem $\ref{s=}$, there exists a real-time blind deterministic vector automaton with size $k$ (a rtDBVA(\textit{k}) with just one state) recognizing the same language. Suppose that there exists another real-time blind vector automaton $\mathcal{V}$ with size $k'$ such that $k'<k$. Then by Theorem $\ref{s=}$, there exists a TuFA with $k'$ states recognizing $\mathtt{MOD_k}$. Since we know that any TuFA recognizing $\mathtt{MOD_k}$ should have at least $k$ states, we get a contradiction.
\qed\end{proof}

\section{Nondeterministic  vector automata}

We now define the \textit{real-time nondeterministic vector automaton}
(rtNVA(\textit{k})) by adding the capability of making nondeterministic
choices to the rtDVA(\textit{k}). The transition function $\delta$ is now replaced by $\delta: Q \times
\Sigma \times \Omega \rightarrow \mathbb{P}(Q\times S) $, where $\mathbb{P}(A)$ denotes the
power set of the set $A$. We will also study blind versions of these machines: A \textit{real-time
nondeterministic blind vector automaton} (rtNBVA(\textit{k})) is just a rtNVA(\textit{k}) which does not check
the vector entries until the end of the computation.

We start by showing that it is highly likely that rtNVA(\textit{k})'s are more powerful than their
deterministic versions.

\begin{theorem}
If $\bigcup_k \mathfrak{L}(\textup{rtNVA(\textit{k})})=\bigcup_k \mathfrak{L}(\textup{rtDVA(\textit{k})})$, then $ \mathsf{P}=\mathsf{NP}$.
\end{theorem}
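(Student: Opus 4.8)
The plan is to derive $\mathsf{P}=\mathsf{NP}$ from the hypothesis by exhibiting an $\mathsf{NP}$-complete language that lies in $\bigcup_k\mathfrak{L}(\textup{rtNVA}(k))$ and then invoking the deterministic time--space bound already in hand. By Theorem~\ref{theorem:rtDVAkinP} we have $\bigcup_k\mathfrak{L}(\textup{rtDVA}(k))\subseteq\mathsf{TISP}(n^3,n)\subseteq\mathsf{P}$, so the assumed equality $\bigcup_k\mathfrak{L}(\textup{rtNVA}(k))=\bigcup_k\mathfrak{L}(\textup{rtDVA}(k))$ would place every language of a nondeterministic real-time vector automaton in $\mathsf{P}$. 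It therefore suffices to construct, for some fixed $k$, a rtNVA($k$) recognizing a single $\mathsf{NP}$-hard language.

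For the witness I would use binary-encoded subset sum, say $\mathtt{SUBSETSUM}=\{\,\mathrm{bin}(t)\#\mathrm{bin}(a_1)\#\cdots\#\mathrm{bin}(a_n)\mid n\ge1,\ \exists S\subseteq\{1,\dots,n\}:\sum_{i\in S}a_i=t\,\}$, with the target deliberately written first. I claim this is recognized by a real-time \emph{blind} machine, a rtNBVA(3), which is in particular a rtNVA(3). Its vector is $[\,V,R,1\,]$: a permanent constant $1$; an accumulator $R$ into which the bits of the number currently being scanned are shifted, reading a bit $b$ applying $R\mapsto 2R+b$ via an integer matrix; and a running value $V$. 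While scanning $\mathrm{bin}(t)$ the machine shifts into $V$ with a sign flip, reaching $V=-t$. At each separator $\#$ it makes a nondeterministic choice between the matrix realizing $(V,R)\mapsto(V+R,0)$ (``keep the number just finished'') and the one realizing $(V,R)\mapsto(V,0)$ (``drop it''); the first separator, which follows $t$, is only used to clear $R$. On the right endmarker the machine resolves the keep/drop choice for the last block $a_n$ and simultaneously adds $1$ to $V$, then enters an accept state, so that the single end-of-computation equality test accepts iff the first entry is $1$, i.e.\ iff $V=0$, i.e.\ iff some subset of the $a_i$ sums to $t$. Inputs not matching the $\#$-pattern are rejected by the finite control, which ignores the digit symbols; all matrices have integer entries; and exactly one step is spent per input symbol, so the machine is genuinely real-time.

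Combining the two ingredients, $\mathtt{SUBSETSUM}\in\bigcup_k\mathfrak{L}(\textup{rtNVA}(k))$, so under the hypothesis $\mathtt{SUBSETSUM}\in\mathsf{P}$ by the first paragraph; since $\mathtt{SUBSETSUM}$ is $\mathsf{NP}$-complete, this yields $\mathsf{NP}\subseteq\mathsf{P}$ and hence $\mathsf{P}=\mathsf{NP}$.

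The high-level implication is routine; the real work, and the main obstacle, is the correctness of the automaton, in particular \emph{when} the nondeterministic include/exclude decisions are taken. Placing $t$ first is what makes this clean: each such decision is attached to the $\#$ (or the endmarker) that \emph{follows} a block, so the machine never needs to know in advance which block is the last and never needs lookahead. One must also verify that a single equality test at the very end suffices, so that the construction is blind as claimed, and that the witness language is $\mathsf{NP}$-hard in the relevant sense: binary-encoded subset sum is $\mathsf{NP}$-complete outright, whereas a unary encoding would only be pseudo-polynomially hard and would not do.
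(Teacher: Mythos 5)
Your proposal is correct and follows essentially the same route as the paper: both reduce the implication to Theorem~\ref{theorem:rtDVAkinP} and exhibit a rtNBVA(3) for binary $\mathtt{SUBSETSUM}$ that shifts the binary numbers into vector entries by doubling-and-adding and uses nondeterminism to select the subset, checking a single linear condition at the end. The only differences are cosmetic (sign conventions, attaching the keep/drop choice to the trailing separator, and arranging the final test to be against $1$ rather than $0$).
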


\begin{proof}
We construct a rtNBVA(3) $\mathcal{V}$ recognizing the $\mathsf{NP}$-complete language
$\mathtt{SUBSETSUM}$,
which is the collection of all strings of the form $t         \#  a_1\#...\# a_n\#$, such
that $t$ and the $a_i$'s are numbers in binary notation $(1 \leq i \leq n)$, and there
exists a set $I \subseteq \{1, . . . , n\}$ satisfying $\sum_{i \in
I}a_i=t$, where $n > 0$.
 The main idea of this construction is
that we can encode the numbers appearing in the input string to certain  entries of the vector, and perform
arithmetic on them, all in real time. We use a similar encoding given in \cite{Yak13A}. $\mathcal{V}$'s initial
vector is $[0,0,1]$. When scanning the
symbols of $t$, $\mathcal{V}$ multiplies the vector with the matrix $M_0$ (resp. $M_1$) for each scanned $0$
(resp. $1$).

$$M_0=
\left[
\begin{array}{rrr}
2&0&0 \\
0&1&0\\
0&0&1
\end{array}
\right ]
M_1=
\left[
\begin{array}{rrr}
2&0&0\\
0&1&0\\
1&0&1
\end{array}
\right ]
.$$
 When $\mathcal{V}$ finishes reading $t$, the vector equals
$[t,0,1]$. In the rest of the computation, $\mathcal{V}$ nondeterministically decides which $a_i$'s to
subtract from the second entry.
Each selected $a_i$ is encoded in a similar fashion to the fourth entry of the vector, using the matrices

$$N_0 =
\left [
\begin{array}{rrr}
1&0&0 \\
0&2&0 \\
0&0&1\\
\end{array}
\right ]
N_1 =
\left [
\begin{array}{rrr}
1&0&0 \\
0&2&0\\
0&1&1 \\
\end{array}
\right ]
.
$$
After encoding the first selected $a_i$, the vector  equals $[t,a_i,1]$. $\mathcal{V}$
subtracts the second entry from the first entry by multiplying the vector with the matrix
$E^{2}_3(-1)$. After this subtraction, the second entry is reinitialized to 0. $\mathcal{V}$ chooses another
$a_j$ if it wishes, and the same procedure is applied. At
the end of the input, $\mathcal{V}$ accepts if the first entry of the vector is equal to 0,  and rejects
otherwise.

If $\bigcup_k \mathfrak{L}$(rtNVA(\textit{k}))=$\bigcup_k \mathfrak{L}$(rtDVA(\textit{k})), then $\mathtt{SUBSETSUM}$
would be in $\mathsf{P}$ by Theorem \ref{theorem:rtDVAkinP}, and we would have to conclude that
$\mathsf{P}=\mathsf{NP}$.
\qed\end{proof}

When we restrict consideration to blind automata, we can prove the following unconditional separation between the deterministic and nondeterministic versions.

\begin{theorem}
$\mathfrak{L}(\textup{rtNBVA(2)})\nsubseteq\bigcup_k \mathfrak{L}(\textup{rtDBVA(\textit{k})})$.
\end{theorem}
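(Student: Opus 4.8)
The plan is to exhibit one language that a \textup{rtNBVA(2)} recognizes but that is not in $\mathsf{S}^=_{\mathbb{Q}}$; by Theorem~\ref{s=} we have $\mathsf{S}^=_{\mathbb{Q}}=\bigcup_k \mathfrak{L}(\textup{rtDBVA}(k))$, so this is enough. The witness I would use is $\mathtt{NEQ}=\{a^i b^j \mid i,j\geq 0,\ i\neq j\}$, where every string not of the form $a^*b^*$ is simply rejected by the finite control.

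The first step is to build a \textup{rtNBVA(2)} $\mathcal{V}$ for $\mathtt{NEQ}$. Throughout the run $\mathcal{V}$ keeps the first entry of its vector equal to $2^{c}$ for an integer counter $c$ (the initial vector is $[1,1]$; the second entry is never touched, so a single entry would in fact suffice, but we present a $2$-dimensional machine to match the statement). At its first move $\mathcal{V}$ nondeterministically commits to one of two modes. In mode ``$i>j$'' it treats each scanned $a$ nondeterministically as \emph{real} --- multiplying the first entry by $2$ --- or as \emph{phantom} --- multiplying by $1$ and passing into a flag state that the control never leaves; each scanned $b$ multiplies the first entry by $\frac{1}{2}$, and the control rejects whenever a $b$ is followed by an $a$. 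On the right endmarker $\mathcal{V}$ accepts iff it is in a flag state (at least one $a$ was declared phantom) and the first entry is $1$ (the real $a$'s match the $b$'s); for input $a^i b^j$ this is possible exactly when $j\leq i-1$. Mode ``$i<j$'' is the mirror image, interchanging the roles of $a$ and $b$ and of $2$ and $\frac{1}{2}$. A direct check shows that the empty string and every $a^i b^i$ are rejected in both modes --- a flag state is reachable only after a strict surplus on the relevant side --- while every $a^i b^j$ with $i\neq j$ is accepted in the matching mode; hence $L(\mathcal{V})=\mathtt{NEQ}$, and since $\mathcal{V}$ never consults the vector before the endmarker it is a genuine \textup{rtNBVA(2)}.

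The second step is to show $\mathtt{NEQ}\notin\mathsf{S}^=_{\mathbb{Q}}$ using the Recurrence Theorem (Fact~\ref{fact:rec}). Suppose otherwise, with recurrence constant $n\geq 1$. Take $y=a^{n}$, $x=b$, and let $z$ be the empty word. For each $t\in\{0,\dots,n-1\}$ we have $yx^{t}z=a^{n}b^{t}$ with $n\neq t$, so $yz,yxz,\dots,yx^{n-1}z$ all lie in $\mathtt{NEQ}$. Fact~\ref{fact:rec} then forces $a^{n}b^{m}=yx^{m}z\in\mathtt{NEQ}$ for every $m\geq 0$, contradicting the case $m=n$. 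Therefore $\mathtt{NEQ}\notin\mathsf{S}^=_{\mathbb{Q}}=\bigcup_k \mathfrak{L}(\textup{rtDBVA}(k))$, which together with the first step yields $\mathfrak{L}(\textup{rtNBVA(2)})\nsubseteq\bigcup_k \mathfrak{L}(\textup{rtDBVA}(k))$.

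The one delicate point is the correctness of $\mathcal{V}$: it must commit to a single mode \emph{before} it discards any symbol, so that it can never accept some $a^i b^i$ by throwing away one $a$ and one $b$ at the same time. Once the two modes are kept strictly separate and the flag-state bookkeeping is carried out carefully, everything else is routine, and the non-membership half is an immediate application of Fact~\ref{fact:rec}. I expect this mode-separation issue to be the only real obstacle.
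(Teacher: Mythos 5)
Your proof is correct, but it takes a genuinely different route from the paper's. Both arguments lean on the characterization $\bigcup_k \mathfrak{L}(\textup{rtDBVA(\textit{k})})=\mathsf{S}^=_{\mathbb{Q}}$ of Theorem~\ref{s=}, but the paper's witness is the unary language $\mathtt{POW}=\{a^{k+2^k}\mid k>0\}$, recognized by a rtNBVA(2) that nondeterministically guesses when to switch from doubling to decrementing, with non-membership in $\mathsf{S}^=_{\mathbb{Q}}$ following immediately from Fact~\ref{fact:unaryreg} (unary languages in $\mathsf{S}^=_{\mathbb{Q}}$ are regular) since $\mathtt{POW}$ is obviously nonregular. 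You instead use $\mathtt{NEQ}=\{a^ib^j\mid i\neq j\}$ and rule it out of $\mathsf{S}^=_{\mathbb{Q}}$ via the Recurrence Theorem (Fact~\ref{fact:rec}); your application of it with $y=a^n$, $x=b$, $z=\varepsilon$ is clean and correct, and your two-mode machine does recognize $\mathtt{NEQ}$ blindly in real time --- the mode-separation issue you flag is exactly the right thing to worry about, and your design (each mode may discard only one letter type, and must discard at least one occurrence of it) handles it. The trade-offs: your construction uses only diagonal matrices acting on the first coordinate, so it actually lives in $\mathfrak{L}(\textup{rtNBVA(1)})$ and hence yields the formally stronger separation $\mathfrak{L}(\textup{rtNBVA(1)})\nsubseteq\bigcup_k \mathfrak{L}(\textup{rtDBVA(\textit{k})})$, at the cost of a binary alphabet and an appeal to the Recurrence Theorem; the paper's choice buys a separation already over a unary alphabet (where dimension 2 is genuinely needed for the affine decrement) with an essentially trivial non-membership argument. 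Either argument is a valid proof of the stated theorem.
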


\begin{proof}
Let us construct a rtNBVA(2) $\mathcal{V}$ recognizing the language $\mathtt{POW}=\{a^{k+2^k}\mid k>0\}$. The initial value of $\mathcal{V}$'s vector is $[1,1]$. $\mathcal{V}$'s computation consists of two stages. In the first stage, $\mathcal{V}$ doubles the value of the first entry for each $a$ that it scans, by multiplying  the vector
with the matrix $M_1$. At any step, $\mathcal{V}$ may nondeterministically decide to enter the second stage. In the second stage,  $\mathcal{V}$ decrements the first entry by 1, for each $a$ that is scanned, using the matrix $M_2$, and accepts if the first entry equals 0 at the end.
$$M_1=
\left [
\begin{array}{rr}
2 & 0
\\ 0 & 1
\end{array}
\right ]
 M_2=
 \left [
 \begin{array}{rr}
1 & 0
\\ -1 & 1
\end{array}
\right ]$$
If the input length is $n$, and if $\mathcal{V}$ decides to enter the second stage right after the $k$'th $a$, the vector value at the end of the
computation equals $[ 2^k - (n-k) ,  1 ]$. We see that $2^k - (n-k)=0$   if and only if $n=k+2^k$ for
some $k$.

Having proven that the nonregular language $\mathtt{POW}\in \mathfrak{L}$(rtNBVA(2)), we note that
$\mathtt{POW}$ can not be in $\bigcup_k \mathfrak{L}$(rtDBVA(\textit{k})), by Theorem \ref{s=}, and Fact
\ref{fact:unaryreg}.
\qed\end{proof}

\section{Open Questions}
\begin{itemize}

\item Can we show a hierarchy result similar to Theorem \ref{thm:hier} for general deterministic vector automata, or for nondeterministic vector automata?
\item Are general nondeterministic real-time vector automata more powerful than rtNBVA(\textit{k})'s?
\item Would properly defined bounded-error probabilistic versions of vector automata correspond to larger classes? Would quantum vector automata outperform the probabilistic ones?

\end{itemize}
\section*{Acknowledgements}
We thank Oscar Ibarra and Holger Petersen for their helpful answers to our questions.

\bibliographystyle{alpha}
 \bibliography{YakaryilmazSay}

\end{document}